\documentclass[a4paper,UKenglish,cleveref, autoref, thm-restate]{lipics-v2021}

\pdfoutput=1 

\usepackage{complexity}
\usepackage{mathtools}
\usepackage{bm}
\usepackage{xspace,todonotes}
\usepackage{booktabs}

\newcommand{\fptpactime}{\ensuremath{\mathsf{FPT}\text{-}\mathsf{PAC_{time}}}}
\newcommand{\fptpac}{\ensuremath{\mathsf{FPT}\text{-}\mathsf{PAC}}}
\newcommand{\xppactime}{\mathsf{XP}\text{-}\mathsf{PAC_{time}}}
\newcommand{\xppac}{\ensuremath{\mathsf{XP}\text{-}\mathsf{PAC}}}

\newcommand{\fpt}{\mathrm{fpt}}

\newcommand{\xp}{\mathrm{xp}}

\newcommand{\bigoh}{\mathcal{O}}
\newcommand{\calC}{\mathcal{C}}
\newcommand{\calG}{\mathcal{G}}
\newcommand{\calL}{\mathcal{L}}
\newcommand{\calF}{\mathcal{F}}
\newcommand{\calI}{\mathcal{I}}

\newcommand{\calU}{\mathcal{U}}
\newcommand{\true}{\texttt{True}\xspace}
\newcommand{\false}{\texttt{False}\xspace}
\newcommand{\red}{\texttt{red}\xspace}
\newcommand{\blue}{\texttt{blue}\xspace}

\newcommand{\no}{\textsc{No}}
\newcommand{\N}{\ensuremath{\mathbb{N}}}
\newcommand{\Nat}{\N}
\newcommand{\supp}{\operatorname{supp}}

\newcommand{\CCcol}{\textup{\textsc{ConsCheck: $2$-Coloring}}\xspace}
\newcommand{\CCsplit}{\textup{\textsc{ConsCheck: Split Graph}}\xspace}
\newcommand{\CCmatch}{\textup{\textsc{ConsCheck: Matching}}\xspace}
\newcommand{\CCpath}{\textup{\textsc{ConsCheck: ($k$-)Path}}\xspace}
\newcommand{\CCEQC}{\textup{\textsc{ConsCheck: Edge Clique Cover}}\xspace}
\newcommand{\CCIS}{\textup{\textsc{ConsCheck: Independent Set}[degree]}\xspace}

\newcommand{\CCDS}{\textup{\textsc{ConsCheck: Dominating Set}[degree]}\xspace}
\newcommand{\SATpr}{\textup{\textsc{Satisfability}}\xspace}
\newcommand{\SCpr}{\textup{\textsc{Set Cover}}\xspace}

\usepackage{tikz}
\usetikzlibrary{positioning,backgrounds,patterns,calc,decorations.pathreplacing,calligraphy}

\def\problem#1#2#3{
\vspace{0.2cm}\noindent\vspace{0.2cm}
\begin{tabular}{l}
\hline
\textsc{#1}\\
\hline
\textbf{Input:} #2\\
\textbf{Output:} #3\\
\hline
\end{tabular}
}

\def\problempar#1#2#3#4{
\vspace{0.2cm}\noindent\vspace{0.2cm}
\begin{tabular}{l}
\hline
\textsc{#1}\\
\hline
\textbf{Input:} #2\\
\textbf{Parameter:} #3\\
\textbf{Output:} #4\\
\hline
\end{tabular}
}

\tikzset{
  circ/.style = {circle,draw,fill=white,inner sep=1.3pt}
}

\hideLIPIcs  

\bibliographystyle{plainurl}

\title{Consistency-Checking Problems: A Gateway to Parameterized Sample Complexity}

\titlerunning{Consistency-Checking Problems: A Gateway to Parameterized Sample Complexity}

\author{Robert Ganian}{Technische Universit\"{a}t Wien, Vienna, Austria \and \url{https://www.ac.tuwien.ac.at/people/rganian/}}{rganian@gmail.com}{https://orcid.org/my-orcid?orcid=0009-0002-3012-7240}{Austrian Science Fund (FWF) [Y1329].}

\author{Liana Khazaliya}{Technische Universit\"{a}t Wien, Vienna, Austria \and \url{https://www.ac.tuwien.ac.at/people/lkhazaliya/}}{lkhazaliya@ac.tuwien.ac.at}{https://orcid.org/my-orcid?orcid=0009-0002-3012-7240}{Vienna Science and Technology Fund (WWTF) [10.47379/ICT22029]; Austrian Science Fund (FWF) [Y1329]; European Union's Horizon 2020 COFUND programme [LogiCS@TUWien, grant agreement No.\ 101034440].}

\author{Kirill Simonov}{Hasso Plattner Institute, Universität Potsdam, Germany \and \url{https://hpi.de/friedrich/people/kirill-simonov.html}}{kirillsimonov@gmail.com}{https://orcid.org/0000-0001-9436-7310}{DFG Research Group ADYN via grant DFG 411362735.}

\authorrunning{R.\ Ganian, L.\ Khazaliya, and K.\ Simonov}

\Copyright{Robert Ganian, Liana Khazaliya, and Kirill Simonov} 

\ccsdesc[500]{Theory of computation~Parameterized complexity and exact algorithms}

\nolinenumbers

\EventEditors{}
\EventNoEds{2}
\EventLongTitle{18th International Symposium on Parameterized and Exact Computation (IPEC 2023)}
\EventShortTitle{IPEC 2023}
\EventAcronym{IPEC}
\EventYear{2023}
\EventDate{}
\EventLocation{Amsterdam, Netherlands}
\EventLogo{}
\SeriesVolume{}
\ArticleNo{}

\begin{document}

\maketitle

\begin{abstract}
Recently, Brand, Ganian and Simonov introduced a parameterized refinement of the classical PAC-learning sample complexity framework. A crucial outcome of their investigation is that for a very wide range of learning problems, there is a direct and provable correspondence between fixed-parameter PAC-learnability (in the sample complexity setting) and the fixed-parameter tractability of a corresponding ``consistency checking'' search problem (in the setting of computational complexity). The latter can be seen as generalizations of classical search problems where instead of receiving a single instance, one receives multiple yes- and no-examples and is tasked with finding a solution which is consistent with the provided examples.

Apart from a few initial results, consistency checking problems are almost entirely unexplored from a parameterized complexity perspective. In this article, we provide an overview of these problems and their connection to parameterized sample complexity, with the primary aim of facilitating further research in this direction. Afterwards, we establish the fixed-parameter (in)-tractability for some of the arguably most natural consistency checking problems on graphs, and show that their complexity-theoretic behavior is surprisingly very different from that of classical decision problems. Our new results cover consistency checking variants of problems as diverse as \textsc{($k$-)Path}, \textsc{Matching}, \textsc{2-Coloring}, \textsc{Independent Set} and \textsc{Dominating Set}, among others.
\keywords{consistency checking, sample complexity, fixed-parameter tractability}
\end{abstract}

\section{Introduction}

While the notion of time complexity is universally applicable and well studied across the whole spectrum of theoretical computer science, on its own it cannot capture the performance of the kinds of algorithms typically studied in the context of machine learning: \textbf{learning algorithms}. That is the domain of sample complexity, and here we will focus on the notion of (efficient) \emph{PAC learning}~\cite{Valiant84,KVazirani94}---arguably the most classical, fundamental and widely known sample complexity framework. An important trait of PAC learning is that while it is built on different principles than time complexity, the two frameworks are connected in a way which allows us to translate intractability and tractability results from one domain to another. It is precisely this connection that gave rise to famous lower bounds in the PAC learning setting, such as the inability to efficiently and properly learn 3-term DNF and $3$-clause CNF formulas~\cite{PittV88,AlekhnovichBFKP08} under the assumption that $\P\neq\NP$, and \emph{consistency checking problems} form the pillar of this connection.

Given the success of parameterized complexity as a concept generalizing classical time complexity analysis, it would seem natural to ask whether its principles can also be used to obtain a deeper understanding of efficient PAC-learnability. Brand, Ganian and Simonov~\cite{BrandGanianSimonov23} very recently introduced the foundations for a parameterized theory of PAC learning, which crucially also includes a bridge to parameterized complexity theory in the usual time complexity setting. The primary goal of this article is to show how the parameterized complexity paradigm can be used to draw new boundaries of tractability in the PAC learning domain, and to provide the parameterized algorithms community with an understanding of the parameterized consistency checking problems which allow us to travel between the sample and time complexity settings in the parameterized regime. We showcase the tools that can be used to deal with parameterized consistency checking problems and the obstacles that await there in the domain of graph problems, where we obtain new algorithmic upper and lower bounds for consistency checking variants of multiple natural problems on graphs.

\smallskip
\noindent \textbf{A Gentle Introduction to PAC Learning.}\quad
It will be useful to set the stage with a high-level and informal example of the setting in which PAC learning operates\footnote{Formal definitions are provided in \Cref{sec:prelims}.}. Let us imagine we would like to ``learn'' a way of labeling points in a plane as either ``good'' or ``bad'', knowing that the good points are precisely those contained in some unknown axis-parallel rectangle $R$ in the plane. A learning algorithm in the PAC regime would be allowed to ask for a set of correctly labeled sample points, each of which would be drawn from some unknown distribution $D$, and would attempt to use these to ``learn'' $R$ (so that it can use it to label any point that it looks at, even those which were not given as samples). This mental experiment is useful since it immediately clarifies that
\begin{itemize}
\item there is some probability that a PAC learning algorithm completely fails, since the samples we receive could be non-representative (\emph{for instance, there is a non-zero probability that even if $D$ is uniform and $R$ is small, the sample points could all be drawn from inside $R$}), and
\item even if a PAC learning algorithm intuitively ``works correctly'', it is essentially guaranteed that it will not classify some samples (i.e., the sample points) correctly (\emph{for instance, there could be points that lie close to the exact boundary of $R$ which are unlikely to be drawn as samples based on $D$, making it impossible to obtain the exact position of $R$}).
\end{itemize}
Given these natural limitations, we can informally explain what it means for a learning problem to be \emph{efficiently PAC-learnable}: it admits an algorithm which 
\begin{enumerate}
\item takes as input a sample size $n$, a confidence measure $\delta$ and an accuracy measure $\varepsilon$, 
\item runs in time $(n+\frac{1}{\delta}+\frac{1}{\varepsilon})^{\bigoh(1)}$ and asks for $(n+\frac{1}{\delta}+\frac{1}{\varepsilon})^{\bigoh(1)}$ samples, and then 
\item outputs something which will, with probability at least $1-\delta$, ``work correctly'' in almost all cases (measured by $\varepsilon$).
\end{enumerate}

It needs to be clarified that beyond the study of efficient PAC learnability, a substantial amount of fundamental work in the PAC learning direction has also been carried out on whether a problem is PAC learnable at all~\cite{BlumerEHW89,Hanneke16,AgarwalABLU21}, on the distinction between so-called proper and improper learning~\cite{KVazirani94,BousquetHMZ20}, and on many other aspects and considerations that lie outside of the scope of this paper. Here, our focus lies on how the gap between efficient and ``non-efficient'' PAC-learnability of learning problems can be bridged by the parameterized PAC learning framework of Brand, Ganian and Simonov~\cite{BrandGanianSimonov23}, and the associated study of consistency checking problems. 

To illustrate how parameterized complexity can be used here, let us turn to a different example of a simple learning problem that is based on the idea of representing cyber-attacks as graphs proposed, e.g., by Sheyner and Wing~\cite{SheynerW03,Wing06}. Assume we have a network consisting of $n$ nodes which is protected by $k$ hidden defense nodes. A cyberattack on this network can be represented as a set of edges over the $n$ nodes, and is evaluated as successful if and only if an edge in that attack is not incident to any defense node (i.e., an attack fails if and only if the defense nodes form a vertex cover of the attack edges). Individual samples represent attacks made on the network, and the learning task is to identify all the defense nodes. This problem corresponds to \textsc{Vertex Cover Learning}~\cite{DamaschkeM10},
which Brand, Ganian and Simonov showed to admit a PAC learning algorithm which requires polynomially many samples but time $2^k\cdot (n+\frac{1}{\delta}+\frac{1}{\varepsilon})^{\bigoh(1)}$ where $k$ is the size of the sought-after vertex cover~\cite{BrandGanianSimonov23}. This is a prototypical representative of the class \fptpactime. We remark that in the context of PAC learning, one explicitly distinguishes between the time required by the learning algorithm and the number of samples it uses, as the latter may in some contexts be much more difficult to obtain.
A picture of the parameterized complexity landscape above efficient PAC learnability is provided later together with the formal definitions (see Figure~\ref{fig:classes}).

Crucially, whenever we are dealing with a learning problem $\mathcal{P}_\texttt{learn}$ where the size of the hypothesis space (i.e., the number of ``possible outputs'') is upper-bounded by a certain function (see \Cref{thm:red-learn-to-dec}), the parameterized sample complexity of $\mathcal{P}_\texttt{learn}$ can be directly and formally linked to the parameterized time complexity of the consistency checking variant $\mathcal{P}_\texttt{cons}$ of the same problem~\cite{BrandGanianSimonov23}, where the task is to compute a ``solution'' (a hypothesis) which is consistent with a provided set of positive and negative examples. This motivates the systematic study of parameterized consistency checking problems, an area which has up to now remained almost entirely unexplored from the perspective of fixed-parameter (in-)tractability. 

\smallskip
\noindent \textbf{The Parameterized Complexity of Consistency Checking on Graphs.}\quad
A few initial examples of parameterized consistency checking problems have been solved by the theory-building work of Brand, Ganian and Simonov~\cite{BrandGanianSimonov23}; in particular, they showed that consistency checking for vertex-deletion problems where the base class $\mathcal{H}$ can be characterized by a finite set of forbidden induced subgraphs is fixed-parameter tractable (which implies the aforementioned fact that \textsc{Vertex Cover Learning} is in \fptpactime), but no analogous result can be obtained for all classes $\mathcal{H}$ characterized by a finite set of forbidden minors unless $\FPT\neq \W[1]$. 

In this article, we expand on these results by establishing the fixed-parameter (in-)tractability of consistency checking for several other classical graph problems whose decision versions are well-known to the parameterized complexity community. 
The aim here is to showcase how parameterized upper- and lower-bound techniques fare when dealing with these new kinds of problems. 

It is important to note that the tractability of consistency checking requires the tractability of the corresponding decision/search problem (as the latter can be seen as a special case of consistency checking), but the former can be much more algorithmically challenging than the latter: many trivial decision problems become computationally intractable in the consistency checking regime. We begin by illustrating this behavior on the classical $2$-\textsc{Coloring} problem, i.e., the task of partitioning the vertices of the graph into two independent sets. We show that while consistency checking for $2$-\textsc{Coloring} is intractable (and hence a $2$-coloring is not efficiently PAC-learnable), consistency checking for \textsc{Split Graph}, i.e., the task of partitioning the vertices into an independent set and a clique, is polynomial-time tractable.

Moving on to parameterized problems, we begin by considering three classical edge search problems, notably \textsc{Matching}, ($k$-)\textsc{Path} and \textsc{Edge Clique Cover}. 
In the classical decision or search settings, the first problem is polynomial-time solvable while the latter two admit well-known fixed-parameter algorithms.
Interestingly, we show that consistency checking for the former two problems is \W[2]-hard\footnote{More precisely, a fixed-parameter algorithm for either of these problems would imply \FPT=\W[1] (see Section~\ref{sec:edgesearch}).}, but is fixed-parameter tractable for the third, i.e., \textsc{Edge Clique Cover}. 

Next, we turn our attention to the behavior of two classical vertex search problems, specifically \textsc{Independent Set} and \textsc{Dominating Set}. While both problems are fixed-parameter intractable already in the classical search regime, here we examine their behavior on bounded-degree graphs (where they are well-known to be fixed-parameter tractable). Again, the consistency checking variants of these problems on bounded-degree graphs exhibit a surprising complexity-theoretic behavior: \textsc{Dominating Set} is \FPT, but \textsc{Independent Set} is \W[2]-hard even on bounded-degree graphs. 

As the final contribution of the paper, we show that most of the aforementioned consistency checking lower bounds can be overcome if one additionally parameterizes by the number of negative samples. In particular, we obtain fixed-parameter consistency checking algorithms for $2$-\textsc{Coloring}, \textsc{Matching} and ($k$-)\textsc{Path}  when we additionally assume that the number of negative samples is upper-bounded by the parameter. On the other hand, \textsc{Independent Set} remains fixed-parameter intractable (at least \W[1]-hard) even under this additional restriction. As our final result, we show that \textsc{Independent Set} becomes fixed-parameter tractable if we instead consider the total number of samples (i.e., both positive and negative) as an additional parameter.
The proofs of these results are more involved than those mentioned in the previous paragraphs and rely on auxiliary graph constructions in combination with color coding. We remark that the parameterization by the number of negative samples in the consistency checking regime could be translated into a corresponding parameterization of the distribution in the PAC learning framework. A summary of our individual results for consistency checking problems is provided in Table~\ref{tab:problems}.

\begin{table*}[t]
  \begin{center}
   \scalebox{0.93}{
    \begin{tabular}{@{}l@{\quad}c@{\quad}c@{\quad}c@{\quad}c@{}}\toprule

Problem & Decision/Search & Consistency Checking & Consistency Checking[samples] \\
\hline
\textsc{$2$-Coloring} & \P & \NP-hard (Thm.~\ref{thm:2colNP}) & \FPT\ (Thm.~\ref{thm:ColFPT}) \\
\textsc{Split Graph} & \P & \P\ (Thm.~\ref{thm:splitcons}) & --- \\
\textsc{Matching} & \P & \W[2]-hard (Thm.~\ref{thm:CCmatchhard}) & \FPT\ (Thm.~\ref{thm:CCmatchfpt}) \\
\textsc{($k$)-Path} & \FPT & \W[2]-hard (Thm.~\ref{thm:CCpathhard}) & \FPT\ (Thm.~\ref{thm:CCpathfpt}) \\
\textsc{Edge Clique Cover} & \FPT & \FPT\ (Thm.~\ref{thm:CCEDS}) & --- \\
\textsc{Independent Set}[degree] & \FPT & \W[2]-hard (Thm.~\ref{thm:ISWtwo}) & \W[1]-hard$^\star$ (Thm.~\ref{thm:ISWone},~\ref{thm:CCISfpt})\\
\textsc{Dominating Set}[degree] & \FPT & \FPT\ (Thm.~\ref{thm:DSFPT}) & ---\\
     \bottomrule
    \end{tabular}
      }
  \end{center}

  \caption{\small An overview of the concrete results obtained for consistency checking problems in this article, where the columns provide a comparison between the complexity of the decision/search variant, the consistency checking variant, and the consistency checking variant where the number of negative samples is taken as an additional parameter. Problems marked with ``[degree]'' are considered over bounded-degree input graphs/samples, and the ``$^\star$'' marks that the problem becomes fixed-parameter tractable when additionally parameterized by the total number of samples. The lower bounds stated in the table are simplified; the precise formal statements are provided in the appropriate theorems.} 
  \label{tab:problems}
\end{table*}

\smallskip
\noindent \textbf{Related Work.}\quad
The connection between parameterized learning problems and parameterized consistency checking was also hinted at in previous works that studied the (parameterized) sample complexity of learning juntas~\cite{ArvindKL09} or learning first-order logic~\cite{BergeremGR22}.
Moreover, the problem of computing optimal decision trees, which has received a significant amount of recent attention~\cite{OrdyniakSzeider21,EibenOrdyniakPaesaniSzeider23}, can also be seen as a consistency checking problem where the sought-after solution is a decision tree.

\section{Preliminaries}
\label{sec:prelims}

We assume familiarity with basic graph terminology~\cite{Diestel17} and parameterized complexity theory~\cite{DBLP:books/sp/CyganFKLMPPS15}. We use $[t]$ to denote the set $\{1,\dots,t\}$. For brevity, we will denote sets of tuples of the form $\{(\alpha_1,\beta_1),\dots,(\alpha_t,\beta_t)\}$ as $(\alpha_i,\beta_i)_{i\in [t]}$, and the set of two-element subsets of a set $Z$ as $Z\choose 2$.
As basic notation and terminology, we set $\{0,1\}^\ast = \bigcup_{m\in\N} \{0,1\}^m$.
A \emph{distribution} on $\{0,1\}^n$ is a mapping $\mathcal{D}_n: \{0,1\}^n \rightarrow [0,1]$ such that $\sum_{x \in \{0,1\}^n} \mathcal{D}_n(x) = 1$, and the \emph{support} of $\mathcal{D}_n$ is the set $\supp\mathcal{D}_n = \{ x \mid \mathcal{D}_n(x) > 0\}$.

\subsection{Consistency Checking}
While the original motivation for consistency checking problems originates from specific applications in PAC learning, one can define a consistency checking version of an arbitrary \emph{search problem}.

In a search problem, we are given an instance $I \in \{0,1\}^\ast$, and the task is to find a solution $S \in \{0,1\}^\ast$, where the solution is verified by a predicate $\phi(\cdot, \cdot)$, so that $\phi(I, S)$ is true if and only if $S$ is a solution to $I$. Since our focus here will lie on problems which are in \NP, the predicate $\phi(\cdot, \cdot)$ will in all cases be polynomial-time computable. In the context of graph problems, $I$ will typically be a graph (possibly with some auxiliary information such as edge weights or the bound on solution size), and $S$ could be a set of vertices, a set of edges, a partitioning of the vertex set, etc. For example, in the search version of the \textsc{Vertex Cover} problem the input is a graph $G$ together with a bound $k$ on the size of the target vertex cover, potential solutions are subsets of $V(G)$, and a subset $S$ is a solution if and only if the size of $S$ is $k$ and $S$ covers all edges of the graph $G$. One can then write the verifying predicate as
\begin{equation*}
\phi\left((G, k), S\right) = \left(S \subset V(G)\right) \land (|S| = k) \land (\forall \{u, v\} \in E(G), \{u, v\} \cap S \ne \emptyset).
\end{equation*}

For a search problem $\mathcal{P}$, we define the corresponding consistency checking problem $\mathcal{P}_{\text{cons}}$ as follows. Instead of receiving a single instance $I \in \{0,1\}^\ast$ as input, we receive a set of labeled samples $\mathcal{I}=\{(I_1,\lambda_1),(I_2,\lambda_2), \ldots, (I_t,\lambda_t)\}$ where each $I_i, i\in [t],$ is an element of $\{0,1\}^\ast$ and $\lambda_i\in \{0,1\}$. The task is to compute a (\emph{consistent}) \emph{solution} $S \subset \{0,1\}^\ast$ such that $\phi(I_i, S)$ holds if and only if $\lambda_i=1$, for each $i \in [t]$, or to correctly determine that no such solution exists.

In the example of \textsc{Vertex Cover}, for each $i \in [t]$, the instance is the pair $(G_i, k_i)$, so that the target solution has to be a vertex subset\footnote{The property of being a subset is given by the implicit encoding in $ \{0,1\}^\ast$, e.g., vertices in all $V(G_i)$ and $S$ are indexed by integers, and is defined in the same way across all instances. We thus say that $S$ could be a subset of all $V(G_i)$ even though, formally speaking, these are disjoint sets.} of $V(G_i)$, of size $k_i$, and it has to cover all edges of $G_i$, for each $i \in [t]$. Since vertices in all $G_i$'s and $S$ are implicitly associated with their respective counterparts in the other graphs, we can instead treat the graphs $G_i$ as defined over the same vertex set. Also, for instances $i \in [t]$ where $\lambda_i = 1$, if their values of $k_i$ mismatch, then there is clearly no solution; and for those $i \in [t]$ with $\lambda_i = 0$, if the value $k_i$ does not match the respective value of a positive sample, then the condition for $\lambda_i$ is always satisfied. Therefore, we can equivalently reformulate the consistency checking version of \textsc{Vertex Cover} as follows: Given the vertex set $V$, a number $k$, and a sequence of labeled edge sets $(E_1, \lambda_1), \ldots, (E_t, \lambda_t)$,  over $V$, is there a subset $S \subset V$ of size exactly $k$, so that $S$ covers all edges of $E_i$ if and only if $\lambda_i = 1$, for each $i \in [t]$?

One can immediately observe that the polynomial-time tractability of a search problem is a prerequisite for the polynomial-time tractability of the corresponding consistency checking problem. At the same time, the exact definition of the search problem (and in particular the solution $S$) can have a significant impact on the complexity of the consistency checking problem. We remark that there are two possible ways one can parameterize a consistency checking problem: one either uses the parameter to restrict the sought-after solution $S$, or the input $\mathcal{I}$. Each of these approaches can be tied to a parameterization of the corresponding PAC learning problem (see Subsection~\ref{sub:paraPAC}).

Formally, we say that $(\mathcal{P}_{\text{cons}}, \kappa, \lambda)$ is a parameterized consistency checking problem, where $\mathcal{P}_{\text{cons}}$ is a consistency checking problem, $\kappa$ maps solutions $S \in \{0, 1\}^\ast$ to natural numbers, and $\lambda$ maps lists of labeled instances $((I_1, \lambda_1), \ldots, (I_t, \lambda_t))$, $I_i \in \{0, 1\}^*$, $\lambda_i \in \{0, 1\}$, to natural numbers.  The input is then a list of labeled instances $\mathcal{L} = ((I_1, \lambda_1), \ldots, (I_t, \lambda_t))$ together with parameters $k, \ell$, such that $\ell = \lambda(\mathcal{L})$, and the task is to find a consistent solution $S$ with $\kappa(S) = k$.
For example, $k$ could be a size bound on the targeted solution, and $\ell$ could be the maximum degree in any of the given graphs or the number of instances with $\lambda_i = 0$.

\subsection{PAC-Learning}
\label{sub:pac}
The remainder of this section is dedicated to a more formal introduction of the foundations of parameterized PAC learning theory and its connection to parameterized consistency checking problems. We note that while the content of the following subsections is important to establish the implications and corollaries of the results obtained in the article, readers who are interested solely in the obtained complexity-theoretic upper and lower bounds for consistency checking problems can safely skip them and proceed directly to Section~\ref{sec:partition}.

To make the connection between consistency checking problems and parameterized sample complexity clear, we first recall the formalization of the classical theory of PAC learning~\cite{Valiant84,MLbook}.

\begin{definition}
A \emph{concept} is an arbitrary Boolean function $c: \{0,1\}^n \rightarrow \{0,1\}$.
An assignment $x\in \{0,1\}^n$ is called a \emph{positive sample} for $c$ if $c(x) = 1$, and a \emph{negative sample} otherwise. A \emph{concept class} $\mathcal C$ is a set of concepts.
For every $m\in \Nat$, we write $\mathcal C_m = \mathcal C \cap \mathcal{B}_m$, where $\mathcal{B}_m$ is the set of all $m$-ary Boolean functions.

\end{definition}

\begin{definition}
Let $\mathcal{C}$ be a concept class. A surjective mapping $\rho: \{0,1\}^\ast \rightarrow \mathcal{C}$ is called a \emph{representation scheme} of $\mathcal{C}$. 

We call each $r$ with $\rho(r) = c$ a \emph{representation} of concept $c$.

\end{definition}

\begin{definition}
A \emph{learning problem} is a pair $(\mathcal{C}, \rho)$, where $\mathcal{C}$ is a concept class and $\rho$ is a representation scheme for $\mathcal{C}$.
\end{definition}

\begin{definition} \label{def:learning-alg}
A \emph{learning algorithm} for a learning problem $(\mathcal{C},\rho)$ is a randomized algorithm such that:
\begin{enumerate}
\item It obtains the values $n,\varepsilon,\delta$ as inputs, where $n$ is an integer and $0< \varepsilon,\delta \leq 1$ are rational numbers.

\item It has access to a hidden representation $r^\ast$ of some concept $c^\ast = \rho(r^\ast)$ and a hidden distribution $\mathcal{D}_n$ on $\{0,1\}^n$ through an oracle that returns \emph{labeled samples} $(x,c^\ast(x))$, where $x \in \{0,1\}^n$ is drawn at random from $\mathcal{D}_n$.
\item The output of the algorithm is a representation of some concept, called its \emph{hypothesis}.
\end{enumerate}
\end{definition}

When dealing with individual instances of a learning problem, we will use $s=|r^\ast|$ to denote the size of the hidden representation.

\begin{definition}
Let $\mathcal{A}$ be a learning algorithm. Fix a hidden hypothesis $c^\ast$ and a distribution on $\{0,1\}^n$.
Let $h$ be a hypothesis output by $\mathcal{A}$ and $c = \rho(h)$ be the concept $h$ represents.
We define 
\[
\operatorname{err}_h = \mathbb{P}_{x \sim \mathcal{D}_n}(c(x) \neq c^\ast(x))
\]
as the probability of the hypothesis and the hidden concept disagreeing on a sample drawn from $\mathcal{D}_n$,  the so-called \emph{generalization error} of $h$ under $\mathcal{D}_n$.

The algorithm $\mathcal{A}$ is called \emph{probably approximately correct (PAC)} if it outputs a hypo\-thesis $h$ such that $\operatorname{err}_h \leq \varepsilon$ with probability at least $1-\delta.$ 
\end{definition}

Usually, learning problems in this framework are regarded as tractable if they are PAC-learnable within polynomial time bounds. More precisely, we say that a learning problem $L$ is \emph{efficiently} PAC-learnable if there is a PAC algorithm for $L$ that runs in time polynomial in $n,s,1/\varepsilon$ and $1/\delta$.

Consider now a classical search problem $\mathcal{P}$ and its consistency checking version $\mathcal{P}_{\text{cons}}$. One can naturally define the corresponding learning problem $\mathcal{P}_{\text{learn}}$: For a solution $S \in \{0, 1\}^\ast$, let $\phi(\cdot, S)$ be a concept and $S$ its representation; this describes the concept class and its representation scheme. Going back to the \textsc{Vertex Cover} example, for each graph size $N$, the concepts are represented by subsets of $[N]$ (encoded in binary). For a subset $S \subset [N]$, the respective concept $c_S$ is a binary function that, given the encoding of an instance $E$, returns $1$ if and only if $S$ is a vertex cover of $G = ([N], E)$ of size $k$, where $[N]$ is treated as the respective ``ground'' vertex set of size $N$. A PAC-learning algorithm for this problem is thus given a vertex set $V = [N]$, an integer $k$, and an oracle that will produce a sequence of samples $(E_1, \lambda_1), \ldots, (E_t, \lambda_t)$, where the instances $E_i$ are drawn from a hidden distribution $\mathcal{D}$. With probability at least $(1 - \delta)$, the algorithm has to return a subset $S \subset [N]$ that is consistent with an instance sampled from $\mathcal{D}$ with probability at least $(1 - \varepsilon)$. In fact, for \textsc{Vertex Cover} and many other problems, it is sufficient to return a hypothesis that is consistent only with the seen samples $(E_i,\lambda_i)$, $i\in [t]$; this is formalized in the next subsection.

Naturally, we do not expect the learning version of \textsc{Vertex Cover} to be efficiently PAC-learnable, as even finding a vertex cover of a certain size in a single instance is \NP-hard.
This motivates the introduction of parameters into the framework, which is presented next.
We also recall the complexity reductions between (parameterized) consistency checking problem and its respective (parameterized) learning problem, which in particular allows to formally transfer the hardness results such as \NP-hardness above.

\smallskip
\noindent \emph{Remark.}\quad A more general definition of learning problems is sometimes considered in the literature, where the output of a learning algorithm need not necessarily be from the same concept class $\mathcal{C}$ (e.g., it can be a sub- or a super-class of $\mathcal{C}$). This is usually called \emph{improper learning}, as opposed to the classical setting of \emph{proper learning} defined above and considered in this article.

\subsection{Parameterized PAC-Learning}
\label{sub:paraPAC}

We now define parameterized learning problems and recall the connection to the consistency checking problems, as given by the framework of Brand, Ganian, and Simonov~\cite{BrandGanianSimonov23}. For brevity, we omit some of the less important technical details; interested readers can find the full technical exposition in the full description of the framework~\cite{BrandGanianSimonov23}

First we note that in parameterized PAC-learning, both the hidden concept and the hidden distribution can be parameterized, which is formally represented in the next definitions. We call a function $\kappa$ from representations in $\{0, 1\}^\ast$ to natural numbers \emph{parameterization of representations}, and a function $\lambda$ assigning a natural number to every distribution on $\{0, 1\}^n$ for each $n$ \emph{parameterization of distributions}.

\begin{definition}[Parameterized Learning Problems]
A \emph{parameterized learning problem} is a learning problem $(\mathcal{C},\rho)$ together with a pair $(\kappa, \lambda)$, called its \emph{parameters}, where $\kappa$ is a parameterization of representations and $\lambda$ is a parameterization of distributions.

\end{definition}

\begin{definition}[Parameterized Learning Algorithm]
A \emph{parameterized learning algorithm} for a parameterized learning problem $(\mathcal{C},\rho,\kappa,\lambda)$ is a learning algorithm for $(\mathcal{C},\rho)$ in the sense of Definition \ref{def:learning-alg}.
In addition to $n, \varepsilon, \delta$, a parameterized learning algorithm obtains two inputs $k$ and $\ell$, which are promised to satisfy $k = \kappa(r^\ast)$ as well as $\ell= \lambda(\mathcal{D}_n)$,
and the algorithm is required to always output a hypothesis $h$ satisfying $\kappa(h) \le k$.
\end{definition}

Let $\poly(\cdot)$ denote the set of functions that can be bounded by non-decreasing polynomial functions in their arguments.
Furthermore, $\fpt(x_1,\ldots,x_t;k_1,\ldots,k_t)$ and $\xp(x_1,\ldots,x_t;k_1,\ldots,$ $k_t)$ denote those functions bounded by $f(k_1,\ldots,k_t)\cdot p(x_1,\ldots,x_t)$ and $p(x_1,\ldots,x_t)^{f(k_1,\ldots,k_t)}$, respectively, for any non-decreasing computable function $f$ in $k_1,\ldots,k_t$ and $p \in \poly(x_1,\ldots,x_t)$.

\begin{definition}[$(T,S)$-PAC Learnability] \label{def:learnable}
Let $T(n,s,1/\varepsilon,1/\delta,k,\ell),S(n,s,1/\varepsilon,1/\delta,k,\ell)$ be any two functions taking on integer values, and non-decreasing in all of their arguments.

A parameterized learning problem $\mathcal{L} = (\mathcal{C},\rho, \{\mathcal{R}_k\}_{k\in\N},\lambda)$ is \emph{$(T,S)$-PAC learnable} if there is a PAC learning algorithm for $\mathcal{L}$ that runs in time $\bigoh(T(n,s,1/\varepsilon,1/\delta,k,\ell))$ and queries the oracle at most $\bigoh(S(n,s,1/\varepsilon,1/\delta,k,\ell))$ times. 

We denote the set of parameterized learning problems that are $(T,S)$-PAC learnable by $\mathsf{PAC}[T,S]$. This is extended to sets of functions $\mathbf{S},\mathbf{T}$ through setting $\mathsf{PAC}[T,S] = \bigcup_{S \in \mathbf{S},\\T \in \mathbf{T}} \mathsf{PAC}[T,S]$.
\end{definition}

\begin{definition} \label{def:classes}
Define the complexity classes as follows:
\begin{align*}
\fptpactime &= \mathsf{PAC}[\mathrm{fpt},\mathrm{poly}], \\
\fptpac &= \mathsf{PAC}[\mathrm{fpt},\mathrm{fpt}], \\
\xppactime &= \mathsf{PAC}[\mathrm{xp},\mathrm{poly}], \\
\xppac &= \mathsf{PAC}[\mathrm{xp},\mathrm{xp}],
\end{align*}
where we fix
\begin{align*}
\poly &= \poly(n,s,1/\varepsilon,1/\delta,k,\ell), \\
\fpt &= \fpt(n,s,1/\varepsilon,1/\delta;k,\ell), \\
\xp &= \xp(n,s,1/\varepsilon,1/\delta;k,\ell).
\end{align*}
\end{definition}

There are examples of natural problems falling into each of these classes~\cite{BrandGanianSimonov23}.
In addition to the above, there is a fifth class that may be considered here: $\mathsf{PAC}[\mathrm{xp},\mathrm{fpt}].$ However, we are not aware of any natural problems residing there that are not given by the ``lower'' classes.

Figure \ref{fig:classes} provides an overview of these complexity classes and their relationships.

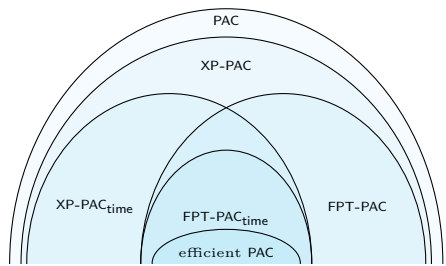
\begin{figure}
\begin{center}

\begin{tikzpicture}[set/.style={fill=cyan,fill opacity=0.04},scale=0.75]
\clip (-4.3,0) rectangle (5,5);

\draw[set] (0,0) ellipse (3.8cm and 4.5cm);
\node at (0,4.3) {\tiny $\mathsf{PAC}$};

\draw[set] (0,0) ellipse (3.6cm and 4cm);
\node at (0,3.5) {\tiny $\xppac$};

\node at (-2.3,1) {\tiny $\xppactime$};
\draw[set] (-1,0) ellipse (2.5cm and 3cm);

\node at (2.3,1) {\tiny $\fptpac$};
\draw[set] (1,0) ellipse (2.5cm and 3cm);

\node at (0,0.8) {\tiny $\fptpactime$};
\draw[set] (0,0) ellipse (1.5cm and 2cm);

\node at (0,0.2) {\tiny efficient $\mathsf{PAC}$}; 
\draw[set] (0,0) ellipse (1.3cm and 0.6cm);
\end{tikzpicture}
\end{center}
\caption{A schematic view of the parameterized learning classes defined in Definition \ref{def:classes}.}
\label{fig:classes} 
\end{figure}

\subsection{Consistency Checking for PAC-Learning}
We now recall the results tying the complexity of (parameterized) PAC-learning to (parameterized) consistency checking.
We have already shown that a consistency checking problem can be transformed into a learning problem, by viewing the hidden solution as the representation of the hidden concept; the same operation can also be done the other way around. Moreover, this transformation can be performed while respecting the parameters. Let $\mathcal{P}_{\text{cons}}$ be a consistency checking problem, and let $\mathcal{P}_{\text{learn}}$ be the respective learning problem. Consider a parameterized version $(\mathcal{P}_{\text{cons}}, \kappa + \lambda)$ of $\mathcal{P}_{\text{cons}}$, where $\kappa$ maps solutions $S \in \{0, 1\}^\ast$ to natural numbers, and $\lambda$ maps lists of labeled instances  $((I_1, \lambda_1), \ldots, (I_t, \lambda_t))$, $I_i \in \{0, 1\}^*$, $\lambda_i \in \{0, 1\}$, to natural numbers. The parameterized learning problem is then $(\mathcal{P}_{\text{learn}}, \kappa, \lambda')$, where $\kappa$ is given by the same function is the parameterization of representations, as representations of concepts are exactly the solutions in the original search problem, and $\lambda'(\mathcal{D})$ for a distribution $\mathcal{D}$ is the maximum value of $\lambda(\mathcal{L})$, where $\mathcal{L}$ is any set of labeled instances produced by sampling from~$\mathcal{D}$.

It is well-known that, under the assumption that the hypothesis space is not too large, there is an equivalence between a learning problem being PAC-learnable and the corresponding consistency checking problem being solvable in randomized polynomial time~\cite{PittV88}. Brand, Ganian and Simonov proved a generalization of this equivalence in the parameterized sense~\cite{BrandGanianSimonov23}, which we recall next

\begin{theorem}[Corollary of Theorem 3.17~\cite{BrandGanianSimonov23}] \label{thm:red-dec-to-learn}
    Let $\mathcal{P}_{\text{cons}}$ be a parameterized consistency checking problem, and $\mathcal{P}_{\text{learn}} = (\mathcal{C},\rho,\kappa,\lambda)$ be its matching parameterized learning problem,
where $\lambda$ depends only on the support of the distribution.

If $\mathcal{P}_{\text{learn}}$ is in $\fptpac$, then $\mathcal{P}_{\text{cons}}$ is in $\FPT$.

Similarly, if $\mathcal{P}_{\text{learn}}$ is in $\xppac$, then $\mathcal{P}_{\text{cons}}$ is in $\XP$.
\end{theorem}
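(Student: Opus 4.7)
The plan is to convert an algorithm for $\mathcal{P}_{\text{learn}}$ into one for $\mathcal{P}_{\text{cons}}$ by invoking the PAC learner on a simulated sample oracle. Given a consistency checking input $\mathcal{L} = ((I_1, \lambda_1), \ldots, (I_t, \lambda_t))$ with solution parameter $k$ and instance parameter $\ell = \lambda(\mathcal{L})$, I would define a distribution $\mathcal{D}$ on $\{0,1\}^n$ whose support is precisely $\{I_1, \ldots, I_t\}$ (say, uniform over the distinct samples), and simulate the oracle by repeatedly drawing some $I_i$ and returning $(I_i, \lambda_i)$. By the hypothesis that $\lambda$ depends only on $\supp \mathcal{D}$, the learning-side parameter $\lambda'(\mathcal{D})$ equals $\lambda(\mathcal{L}) = \ell$, so we can feed $(k,\ell)$ directly to the learning algorithm without knowing $\mathcal{D}$ in any finer sense.

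Next, I would set $\varepsilon := 1/(t+1)$ and $\delta := 1/3$. Since each $I_i$ carries mass at least $1/t > \varepsilon$ under $\mathcal{D}$, any hypothesis $h$ with $\operatorname{err}_h \le \varepsilon$ must agree with $c^*$ on every $I_i$; that is, the representation output by the algorithm is a consistent solution to $\mathcal{L}$. If a consistent solution exists, the simulated oracle truthfully realizes some $c^* \in \mathcal{C}$, and the PAC guarantee ensures that the learner returns such a hypothesis with probability at least $2/3$. We verify consistency deterministically by checking $\phi(I_i, S)$ for each $i \in [t]$, and amplify to success probability $1-\delta'$ by $\mathcal{O}(\log(1/\delta'))$ independent repetitions. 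If no repetition produces a consistent hypothesis, we report that none exists.

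For the running time, the learning algorithm uses $\fpt(n, s, 1/\varepsilon, 1/\delta; k, \ell)$ steps; here $1/\varepsilon = t+1$ and $n, s$ are polynomial in $|\mathcal{L}|$, while $1/\delta$ is constant, so the whole procedure takes $f(k,\ell) \cdot \operatorname{poly}(|\mathcal{L}|)$ time and places $\mathcal{P}_{\text{cons}}$ in $\FPT$; replacing $\fpt$ by $\xp$ throughout yields the $\XP$ statement analogously. The main obstacle — and precisely the point where the hypothesis of the theorem enters — is that the distribution parameter fed to the learner must match the instance parameter extracted from $\mathcal{L}$; without the support-only dependence of $\lambda$, the correct $\ell$ might depend on probability weights or other features that $\mathcal{L}$ does not expose, and the reduction would break down. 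A secondary subtlety, namely that the learner's guarantee is vacuous when no consistent $c^*$ exists, is neutralized by the verify-and-amplify step.
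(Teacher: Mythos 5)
Your proposal follows the standard Pitt--Valiant-style reduction from the learning problem to the consistency problem, which is precisely the argument underlying Theorem~3.17 of Brand, Ganian, and Simonov; the paper at hand does not reprove the statement but only cites it. All the essential ingredients are correctly identified and correctly deployed: the uniform distribution over the distinct samples together with $\varepsilon < 1/t$ forces any low-error hypothesis to agree with the labels on every sample; the support-only dependence of $\lambda$ is exactly what lets you hand the learner the parameter value $\ell = \lambda(\mathcal{L})$; and the verify-and-amplify loop correctly neutralizes the unrealizable case, where the PAC guarantee is vacuous because no concept in $\mathcal{C}$ is consistent with the supplied labels.

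The one caveat you should make explicit concerns the final claim that this ``places $\mathcal{P}_{\text{cons}}$ in $\FPT$.'' The procedure you describe is a one-sided-error Monte Carlo algorithm: the simulated oracle draws each response by a coin flip, and the underlying PAC learner is itself randomized. So, strictly speaking, you have established membership in a randomized analogue of $\FPT$ rather than in the deterministic class. This mirrors the classical situation, where Pitt--Valiant places the consistency problem in $\RP$, not $\P$. Naive derandomization by enumerating all possible oracle-response sequences is hopeless here, since the learner may request a number of samples that is polynomial in $n$, $t$, and $1/\varepsilon$, making the number of sequences superexponential and far outside $\fpt(\cdot\,;k,\ell)$. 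In a self-contained write-up you should either state the randomized conclusion explicitly (which is what the downstream hardness transfers in the paper actually need, via the contrapositive) or supply a separate argument that the randomization can be removed.
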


\begin{theorem}[Corollary of Theorem 3.19~\cite{BrandGanianSimonov23}]\label{thm:red-learn-to-dec}
    Let $\mathcal{P}_{\text{cons}}$ be a parameterized consistency checking problem, and $\mathcal{P}_{\text{learn}} = (\mathcal{C},\rho,\kappa,\lambda)$ be its matching parameterized learning problem.
    Denote the set of representations of concepts in $C \in \mathcal{C}$ of arity $n$ with $\kappa(C) = k$ by $\mathcal{H}_{n,k}$.

If $\mathcal{P}_{\text{cons}}$ is in $\FPT$ and $\log |\mathcal{H}_{n,k}| \in \fpt(n;k)$,
then $\mathcal{L}$ is in $\fptpactime$.

Similarly, if $\mathcal{P}_{\text{cons}}$ is in $\XP$ and $\log |\mathcal{H}_{n,k}| \in \xp(n;k)$, then $\mathcal{L}$ is in $\xppactime$.
\end{theorem}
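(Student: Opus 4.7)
The plan is to lift the classical Occam's razor / empirical risk minimization argument of Pitt and Valiant to the parameterized setting. Concretely, given inputs $n, \varepsilon, \delta, k, \ell$ together with oracle access to a hidden representation $r^\ast$ of size $s$ with $\kappa(r^\ast) = k$ under a hidden distribution $\mathcal{D}_n$ with $\lambda(\mathcal{D}_n) = \ell$, the learner proceeds in two phases. In phase one it draws
\[
m \;=\; \left\lceil \tfrac{1}{\varepsilon}\bigl(\ln|\mathcal{H}_{n,k}| + \ln\tfrac{1}{\delta}\bigr) \right\rceil
\]
labeled samples $\mathcal{L} = ((I_1,\lambda_1),\dots,(I_m,\lambda_m))$ from the oracle. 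In phase two it feeds $\mathcal{L}$ together with the parameters $k$ and $\ell$ into the assumed \FPT\ (respectively \XP) algorithm for $\mathcal{P}_{\text{cons}}$ and outputs whatever hypothesis the consistency checker returns.

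For correctness, first note that $r^\ast$ itself is a representation in $\mathcal{H}_{n,k}$ that is trivially consistent with every labeled sample drawn from the oracle, so a consistent solution always exists and the checker is obliged to return some $h \in \mathcal{H}_{n,k}$. Second, because $\mathcal{L}$ is produced by sampling from $\mathcal{D}_n$, the definition of $\lambda'$ as the maximum of $\lambda$ over lists sampled from the distribution ensures $\lambda(\mathcal{L}) \le \lambda'(\mathcal{D}_n) = \ell$; hence the call to the consistency checker falls inside its \FPT\ (resp.\ \XP) promise with respect to $k,\ell$, and runs in time $\fpt(n, m ; k, \ell)$ (resp.\ $\xp$). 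Finally, a standard union bound shows that for the chosen $m$, with probability at least $1-\delta$ every hypothesis in $\mathcal{H}_{n,k}$ that is consistent with all $m$ samples has generalization error at most $\varepsilon$; in particular this holds for the returned $h$, so the algorithm is PAC. The assumption $\log|\mathcal{H}_{n,k}| \in \fpt(n;k)$ (resp.\ $\xp(n;k)$) is exactly what is needed to place both the number of samples $m$ and the total running time inside the required resource bound of $\fptpactime$ (resp.\ $\xppactime$).

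The main obstacle I expect is the bookkeeping between the two parameterizations rather than any new probabilistic argument: one must verify that the parameter $\ell$ supplied to the learner transfers correctly to the parameter of the sampled consistency-checking instance, and that the FPT/XP resource bounds for $\mathcal{P}_{\text{cons}}$ on an input of bit-size $\poly(n,m)$ with parameters $k,\ell$ compose with the sample bound $m$ so as to land in the target class. Once this accounting is carried out, the theorem follows by the same uniform convergence argument that underlies the unparameterized Pitt--Valiant equivalence.
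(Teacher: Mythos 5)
The paper never proves this statement: it is quoted as a corollary of Theorem~3.19 of Brand, Ganian and Simonov and cited, so there is no in-paper proof to compare your argument to. Evaluating your proposal on its own terms, there is nonetheless a concrete gap in the sample-complexity accounting.

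You set the sample budget via the Occam bound over the whole class, $m=\bigl\lceil\tfrac1\varepsilon(\ln|\mathcal{H}_{n,k}|+\ln\tfrac1\delta)\bigr\rceil$. Under the hypothesis $\log|\mathcal{H}_{n,k}|\in\fpt(n;k)$ this only bounds $m$ by $\fpt(n,1/\varepsilon,1/\delta;k)$, i.e.\ by $f(k)\cdot p(n,1/\varepsilon,1/\delta)$ with $f$ an arbitrary computable function. But $\fptpactime=\mathsf{PAC}[\mathrm{fpt},\mathrm{poly}]$ requires the number of oracle queries to be in $\poly(n,s,1/\varepsilon,1/\delta,k,\ell)$; an $\fpt$ sample budget only places the problem in $\fptpac=\mathsf{PAC}[\mathrm{fpt},\mathrm{fpt}]$. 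So the sentence claiming that $\log|\mathcal{H}_{n,k}|\in\fpt(n;k)$ is ``exactly what is needed'' to put $m$ inside the $\fptpactime$ resource bound is wrong: your argument as written proves $\fptpac$ membership, not the stated $\fptpactime$. (The same issue occurs for the $\XP$/$\xppactime$ case.) To hit a polynomial sample count you need a different Occam step---e.g., the Blumer--Ehrenfeucht--Haussler--Warmuth bound taken over the at most $2^{s+1}$ representations of length at most $s$, giving $m=O\bigl((s+\log(1/\delta))/\varepsilon\bigr)\in\poly$---together with a guarantee that the consistency checker's output has length $O(s)$; and one must then explain what work the hypothesis on $\log|\mathcal{H}_{n,k}|$ is actually doing (size-bounding the representations, and keeping the checker's running time under control), which your write-up leaves unaddressed.
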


The theorems above allow us to automatically transfer parameterized algorithmic upper and lower bounds for consistency checking into upper and lower bounds for parameterized learning problems, respectively.
If a parameterized consistency checking  problem is efficiently solvable by a parameterized algorithm, by Theorem \ref{thm:red-learn-to-dec} we get that the parameterized learning problem is efficiently solvable. Note that in the problems considered in this paper the solution is always a set of vertices/edges, or a partition into such sets, thus $\log |\mathcal{H}_{n,k}|$ is always polynomial.

On the other hand, Theorem \ref{thm:red-learn-to-dec} tells us that an efficient algorithm for a parameterized learning problem implies an efficient algorithm for the corresponding paramerized consistency checking problem. Turning this around, we see that lower bounds on consistency checking imply lower bounds for learning. That is, if $\mathcal{P}_{\text{cons}}$ is \W[1]-hard, then $\mathcal{P}_{\text{learn}}$ is not in $\fptpactime$ unless $\FPT = \W[1]$.

\section{Partitioning Problems: 2-Coloring and Split Graphs}
\label{sec:partition}
We begin our investigation by using two basic vertex bipartition problems on graphs to showcase some of the unexpected complexity-theoretic behavior of consistency checking problems. Let us first consider $2$-\textsc{Coloring}, i.e., the problem of partitioning the vertex set into two independent sets. There are two natural ways one can formalize $2$-\textsc{Coloring} as a search problem: either one asks for a vertex set $X$ such that both $X$ and the set of vertices outside of $X$ are independent (i.e., they form a proper $2$-coloring), or one asks for two independent sets $X,Y$ which form a bipartition of the vertex set. Here, we consider the former variant since it has a slightly smaller hypothesis space\footnote{
In general, the precise definition of the sought-after object can be of great importance in the context of consistency checking; this is related to the well-known fact that the selection of a hypothesis space can have a fundamental impact on PAC learnability. However, in our case the proofs provided in this section can also be used to obtain the same results for the latter variant.}.

\problem{\CCcol}{$\mathcal{I}=\{V,(E_i, \lambda_i)_{i\in[t]}\}$ where for each $i\in [t]$, $G_i=(V,E_i)$ is a graph and $\lambda_i\in \{0,1\}$.}{
A set $X\subseteq V$ such that for each $i\in[t]$, $(X,V\setminus X)$ forms\\ a proper $2$-coloring of $G_i$ if and only if $\lambda_i=1$.}

As our first result, we show that \CCcol\ is \NP-hard.

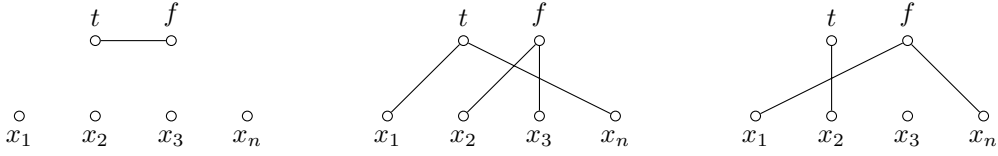
\begin{figure}
\begin{tikzpicture}

\node[circ, label=above:{$t$}] (t) at (1,1) {};
\node[circ, label=above:{$f$}] (f) at (2,1) {};
\node[circ, label=below:{$x_1$}] (1) at (0,0) {};
\node[circ, label=below:{$x_2$}] (2) at (1,0) {};
\node[circ, label=below:{$x_3$}] (3) at (2,0) {};
\node[circ, label=below:{$x_n$}] (n) at (3,0) {};

\draw (t) -- (f); 

\end{tikzpicture}
\hspace{1cm}
\begin{tikzpicture}

\node[circ, label=above:{$t$}] (t) at (1,1) {};
\node[circ, label=above:{$f$}] (f) at (2,1) {};
\node[circ, label=below:{$x_1$}] (1) at (0,0) {};
\node[circ, label=below:{$x_2$}] (2) at (1,0) {};
\node[circ, label=below:{$x_3$}] (3) at (2,0) {};
\node[circ, label=below:{$x_n$}] (n) at (3,0) {};

\draw (t) -- (1)
(f) -- (2)
(t) -- (n)
(f) -- (3); 

\end{tikzpicture}
\hspace{1cm}
\begin{tikzpicture}

\node[circ, label=above:{$t$}] (t) at (1,1) {};
\node[circ, label=above:{$f$}] (f) at (2,1) {};
\node[circ, label=below:{$x_1$}] (1) at (0,0) {};
\node[circ, label=below:{$x_2$}] (2) at (1,0) {};
\node[circ, label=below:{$x_3$}] (3) at (2,0) {};
\node[circ, label=below:{$x_n$}] (n) at (3,0) {};

\draw (t) -- (2)
(f) -- (n)
(f) -- (1); 

\end{tikzpicture}
\caption{For the \SAT\xspace instance $\varphi=(x_1\vee \overline{x_2}\vee \overline{x_3}\vee x_n)\wedge (\overline{x_1}\vee x_2\vee \overline{x_n})$, $n=4$ the correspondent \CCcol instance $\mathcal{I}=\{V, \{(E_+, 1), (E_{C_1}, 0), (E_{C_2}, 0)\}\}$, $V=\{t, f, x_1, x_2, x_3, x_n\}$.}
\label{fig:2col}
\end{figure}

\begin{theorem}
\label{thm:2colNP}
There is no polynomial-time algorithm that solves \CCcol
unless $\P=\NP$.
\end{theorem}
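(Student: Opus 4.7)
I would reduce from \textsc{Satisfiability}. Given a CNF formula $\varphi$ with variables $x_1,\dots,x_n$ and clauses $C_1,\dots,C_m$, I build the \CCcol instance pictured in Figure~\ref{fig:2col}: take the vertex set $V=\{t,f,x_1,\dots,x_n\}$, and construct $m+1$ labelled edge sets over $V$ as follows.
\begin{itemize}
\item A single positive sample $(E_+,1)$ with $E_+=\{\{t,f\}\}$, whose role is to force $t$ and $f$ onto opposite sides of any candidate $2$-colouring.
\item For each clause $C_j$, a negative sample $(E_{C_j},0)$ obtained by adding, for every positive literal $x_i$ appearing in $C_j$, the edge $\{t,x_i\}$, and for every negative literal $\overline{x_i}$ appearing in $C_j$, the edge $\{f,x_i\}$.
\end{itemize}
This construction is clearly polynomial-time computable.

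The central correspondence is between a consistent set $X\subseteq V$ (interpreted, by $E_+$ and up to swapping, so that $t\in X$ and $f\notin X$) and a truth assignment $\alpha$ defined by $\alpha(x_i)=\true$ iff $x_i\in X$. The first step of the correctness argument is to observe that for a literal $\ell$ of clause $C_j$, the corresponding edge in $E_{C_j}$ is monochromatic under $(X,V\setminus X)$ exactly when $\ell$ is satisfied by $\alpha$: a positive literal $x_i$ yields the edge $\{t,x_i\}$, which is monochromatic iff $x_i\in X$; a negative literal $\overline{x_i}$ yields $\{f,x_i\}$, monochromatic iff $x_i\notin X$. Consequently, $(X,V\setminus X)$ fails to be a proper $2$-colouring of $E_{C_j}$ iff $\alpha$ satisfies $C_j$, which is precisely the label $\lambda_j=0$ condition.

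Putting the two directions together: if $\varphi$ is satisfied by some $\alpha$, then $X:=\{t\}\cup\{x_i:\alpha(x_i)=\true\}$ is consistent, because $E_+$ is properly $2$-coloured and every $E_{C_j}$ contains at least one monochromatic edge; conversely, if $X$ is a consistent solution, the edge in $E_+$ forces $\{t,f\}$ to be split, so after possibly swapping sides we may assume $t\in X,\ f\notin X$, and then the assignment $\alpha$ read off from $X$ satisfies every clause by the equivalence above. Hence \CCcol is at least as hard as \SATpr, which yields the claimed \NP-hardness.

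I do not expect any real obstacle here; the only subtlety worth stating carefully is the symmetry of $2$-colouring (allowing the ``WLOG $t\in X$'' step via complementing $X$) and the fact that the $\lambda_j=0$ semantics require us to encode clause satisfaction as the \emph{presence} of a monochromatic edge rather than its absence, which is exactly what the $\{t,\cdot\}$/$\{f,\cdot\}$ gadget achieves.
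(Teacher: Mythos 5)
Your construction and correctness argument coincide exactly with the paper's: the same vertex set $\{t,f,x_1,\dots,x_n\}$, the same positive sample $\{\{t,f\}\}$, the same clause gadgets adding $\{t,x_i\}$ for positive and $\{f,x_i\}$ for negative literals with label $0$, and the same equivalence between monochromatic edges and satisfied literals. This is a correct proof taking essentially the same route as the paper.
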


\begin{proof}
We present a reduction that takes an $n$-variable instance $\varphi$ of the
\SATpr problem (\SAT) and constructs an instance $\mathcal{I}$ of \CCcol\ which admits a solution if and only if $\varphi$ is satisfiable.
Let $\mathcal{C}$ denote the set of clauses of~$\varphi$.

\vspace{0.2cm}\noindent\textit{Construction.}\quad
First, we set the vertex set $V$ in $\mathcal{I}$ to be $\{f, t, x_1, x_2, \dots, x_n\}$.
For each clause $C\in \mathcal{C}$, we construct an edge set $E_C$ as follows.
For each $i\in [n]$, if a true (false) assignment of $x_i$ satisfies $C$, then we add the edge $tx_i$ ($fx_i$) to $E_C$.
For each such edge set $E_C$, we set $\lambda_C=0$.
Finally, we add to $\mathcal{I}$ a positive sample $(E_+, 1)$ such that $E_+=\{tf\}$. An illustration is provided in \Cref{fig:2col}.

\vspace{0.2cm}\noindent\textit{Correctness.}\quad
Suppose, given an instance $\varphi$ of \SAT, that the reduction
described above returns $\mathcal{I}=\{V, (E_i, \lambda_i)\}_{i\in \calC\cup \{+\}}$ as an instance of \CCcol.

Assume that $\varphi$ admits a satisfying assignment $\mathcal{A}\colon \{x_i\}_{i\in[n]}\rightarrow \{\true, \false\}$. 
Consider the colloring $\chi\colon V\rightarrow \{\blue, \red\}$ such that $\chi(t)=\red$, $\chi(f)=\blue$, and for each $i\in [n]$, $\chi(x_i)=\red$ if and only if $\mathcal{A}(x_i)=\true$.

First, the sample $(E_+, 1)$ of $\mathcal{I}$ is consistent with the coloring $\chi$, since its only edge $ft$ was colored properly.
Then, for each $C\in\mathcal{C}$, the sample $(E_C, 0)$ must be consistent with $\chi$, i.e., there exists at least one edge in $E_C$ with same colored endpoints.
Indeed, there must exist a variable $x_i$ is such that $\mathcal{A}(x_i)$ satisfies $\varphi$.

Then, by the construction of $\mathcal{I}$ instance, if $x_i=\true$ ($x_i=\false$) satisfies $C$ then $tx_i\in E_C$ ($fx_i\in E_C$) and hence both $x_i$ and $t$ are \red (both $x_i$ and $f$ are \blue) under the constructed coloring $\chi$.

For the other direction, suppose that there is a coloring $\chi\colon V\rightarrow \{\blue, \red\}$ that is consistent with the instance $\mathcal{I}$ of \CCcol. 
Then $\chi(t)\neq \chi(f)$ due to the construction of $(E_+, 1)\in\mathcal{I}$; 
without loss of generality, let $\chi(t)=\red$, $\chi(f)=\blue$. 
We retrieve a variable assignment $\mathcal{A}$ for $\varphi$ in the following way.
Recall that for each $C\in \mathcal{C}$, the coloring $\chi$ is consistent with the sample $(E_C, 0)$. 
Since the edge $ft$ has a proper coloring, at least one vertex $x_i$ has an edge to either $t$ or $f$ such that both its endpoints are colored the same way. 
If this edge is $x_if$ ($x_it$), then let $\mathcal{A}(x_i)=\false$ ($\mathcal{A}(x_i)=\true$). If this only results in a partial assignment, we extend this to a complete assignment of all variables in $\varphi$ by assigning the remaining variables arbitrarily.

We conclude by arguing that the resulting assignment $\mathcal{A}$ has to satisfy $\varphi$. 
Let us consider an arbitrary clause $C\in\mathcal{C}$ and an edge in the corresponding edge set $E_C$ with same colored endpoints, w.l.o.g.\ $x_if$.
Then, by the way we defined the assignment, $\mathcal{A}(x_i)=\false$.
But by our construction, the edge $x_if\in E_C$ only if $x_i=\false$ satisfies the clause $C$. Thus, the clause $C$ is satisfied by the assignment $\mathcal{A}$.
Following the same argument, each clause $C\in \mathcal{C}$, and accordingly the instance $\varphi$, is satisfied.

\end{proof}

It is worth noting that the graphs constructed by the reduction underlying \Cref{thm:2colNP} are very simple---in fact, even the graph induced by the union of all edges occurring in the instances of \CCcol\ produced by the reduction has a vertex cover number of $2$. This essentially rules out tractability via most standard structural graph parameters. A similar observation can also be made for most other consistency checking lower bounds obtained within this article.

As an immediate corollary of \Cref{thm:2colNP}, we obtain that the corresponding learning problem is not efficiently PAC-learnable~\cite{AlekhnovichBFKP08}. To provide a concrete example of the formal transition from consistency checking to the corresponding learning problem described in \Cref{sub:pac}, we state the problem: In \textsc{$2$-Coloring Learning}, we are given (1) a set $V$ of vertices, a confidence measure $\delta$ and an accuracy measure $\varepsilon$, (2) have access to an oracle that can be queried to return labeled samples of the form $(E,\lambda)$ where $E$ is an edge set over $V$ and $\lambda\in \{0,1\}$ according to some hidden distribution, and (3) are asked to return a vertex subset $X\subseteq V$, whereas a sample $E$ is evaluated as positive for $X$ if and only if $(X,V\setminus X)$ forms a $2$-coloring on $(V,E)$.

\begin{corollary}
\textup{\textsc{$2$-Coloring Learning}} is not efficiently PAC-learnable unless $\P=\NP$.\end{corollary}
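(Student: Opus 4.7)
The plan is to invoke the contrapositive of the standard Pitt--Valiant style reduction from efficient proper PAC-learnability to polynomial-time consistency checking (the unparameterized analogue of Theorem~\ref{thm:red-dec-to-learn}), and to combine it with the \NP-hardness already established in Theorem~\ref{thm:2colNP}. Since the hypothesis space of \textsc{$2$-Coloring Learning} consists of subsets of $V$, we have $\log|\mathcal{H}_n|\le n$, so the quantitative sample-complexity bound required by that reduction is polynomial.

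First I would set up the reduction: given an arbitrary instance $\mathcal{I}=\{V,(E_i,\lambda_i)_{i\in[t]}\}$ of \CCcol, I would construct the empirical distribution $\mathcal{D}$ that places mass $1/t$ on each labeled sample $(E_i,\lambda_i)$, and feed this to a hypothetical efficient PAC learner for \textsc{$2$-Coloring Learning} with $n=|V|$, $\delta=1/2$, and $\varepsilon$ chosen small enough (say $\varepsilon<1/(t+1)$) that any hypothesis with generalization error at most $\varepsilon$ under $\mathcal{D}$ must in fact agree with the label of every sample in $\supp\mathcal{D}$. Oracle calls are answered by drawing uniformly from $\mathcal{I}$. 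The total running time and number of samples are polynomial in $|V|$, $t$, $1/\varepsilon=\bigoh(t)$, and $1/\delta$.

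Then I would verify the returned $X\subseteq V$ by explicitly checking, in polynomial time, that $(X,V\setminus X)$ properly $2$-colors $G_i=(V,E_i)$ iff $\lambda_i=1$, for every $i\in[t]$; on failure I would simply rerun, and standard probability amplification boosts the success probability arbitrarily close to $1$. This yields a randomized polynomial-time algorithm for \CCcol, which by Theorem~\ref{thm:2colNP} is \NP-hard, so \textsc{$2$-Coloring Learning} cannot be efficiently PAC-learnable under the stated assumption.

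The only delicate point, and the place where I expect the bookkeeping to require the most care, is reconciling the complexity-theoretic assumption in the conclusion: the reduction above directly yields $\NP\subseteq\mathsf{RP}$, and the stronger formulation $\P=\NP$ stated in the corollary is exactly the form in which the Pitt--Valiant/Alekhnovich et al.~\cite{AlekhnovichBFKP08} hardness-of-proper-learning transfer is customarily phrased. No new combinatorics are required beyond Theorem~\ref{thm:2colNP}; everything else is the standard plumbing of the consistency-checking-to-PAC correspondence.
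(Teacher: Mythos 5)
Your proof is correct and follows essentially the same route as the paper: the paper states the corollary as an immediate consequence of Theorem~\ref{thm:2colNP} via the standard Pitt--Valiant/Alekhnovich et al.\ correspondence~\cite{PittV88,AlekhnovichBFKP08}, and you have simply unpacked the plumbing of that correspondence (empirical distribution over the samples, $\varepsilon < 1/(t+1)$, amplification). You are also right to flag the $\RP$ versus $\P$ subtlety at the end --- since the learner is randomized, the reduction literally yields $\NP \subseteq \RP$, and ``unless $\P = \NP$'' is the conventional phrasing the paper inherits from the cited works rather than a consequence one derives afresh here.
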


While the intractability of consistency checking for \CCcol might already be viewed as surprising, let us now consider the related problem of partitioning the vertex set into one independent set and one clique---i.e., the \textsc{Split Graph} problem. As a graph search problem, \textsc{Split Graph} is well-known to be polynomially tractable~\cite{HammerS81}. Following the same line of reasoning as for \textsc{$2$-Coloring}, we formalize the corresponding search problem below. Let a pair of vertex subsets $(X\subseteq V,Y\subseteq V)$ be a \emph{split} in a graph $G=(V,E)$ if $(X,Y)$ is a bipartition of $V$ such that $X$ is a clique and $Y$ is an independent set.

\problem{\CCsplit}{$\mathcal{I}=\{V,(E_i, \lambda_i)_{i\in[t]}\}$ where for each $i\in [t]$, $G_i=(V,E_i)$ is a graph and $\lambda_i\in \{0,1\}$.}{
A set $X\subseteq V$ such that for each $i\in[t]$, $(X,V \setminus X)$ is a split in $G_i$\\ if and only if $\lambda_i=1$.}

Unlike \CCcol, \CCsplit turns out to be tractable.

\begin{theorem}
\label{thm:splitcons}
\CCsplit\ can be solved in time $\bigoh(|\mathcal{I}|^3)$.

\end{theorem}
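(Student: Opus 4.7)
The plan is to exploit the structural fact that any graph $G=(V,E)$ admits only $O(|V|)$ distinct split partitions $(X, V\setminus X)$, and that all of them can be enumerated in polynomial time (for instance, via the Hammer--Simeone characterization~\cite{HammerS81} of split graphs). Given this, \CCsplit\ reduces to producing a polynomial-size candidate pool of $X$'s and verifying each against the labeled samples.

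The algorithm splits into two cases. First, if there is at least one positive sample $(E_i, 1)$, then every consistent $X$ must be a split of $G_i$. I would test whether $G_i$ is a split graph (rejecting immediately if not), enumerate the at most $|V|+1$ splits of $G_i$, and for each candidate $X$ verify in $O(|E_j|+|V|)$ time per sample whether $X$ induces a clique and $V\setminus X$ induces an independent set in $G_j$; bitmap membership for $X$ gives $O(1)$-time edge classification. The first candidate consistent with all samples is returned. This case costs $O(|V|\cdot|\mathcal{I}|) = O(|\mathcal{I}|^2)$.

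In the complementary case where every sample is negative, I need an $X$ that is \emph{not} a split of any $G_j$. For each $j$, I would enumerate the splits of $G_j$ and insert them into a hash table $F$; the total size is $|F| = O(t\cdot|V|)$. I then enumerate candidate subsets of $V$ in a canonical order (say, by the integer value of their characteristic vector) and return the first $X\notin F$. Since the enumerated candidates are pairwise distinct, a consistent $X$ must appear within the first $|F|+1$ trials whenever one exists; in the degenerate regime $2^{|V|}\le|F|$ we have $|V|=O(\log|\mathcal{I}|)$, so we can instead enumerate all $2^{|V|}$ subsets exhaustively and output ``no solution'' if every one is forbidden. Using $O(|V|)$ per hash lookup, this case runs in $O(t|V|^2)$, which is bounded by $|\mathcal{I}|^3$ via $|\mathcal{I}|\ge\max(|V|,t)$.

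\textbf{Main obstacle.} The substantive ingredient is the split-enumeration lemma: establishing that any graph has at most $O(|V|)$ splits and that they can be listed in polynomial time. This requires careful handling of graphs with many equal-degree vertices at the clique/independent-set boundary (e.g., stars or near-complete graphs), where swing-vertex freedom can produce up to linearly many distinct splits but no more. Beyond that, Case B needs the slightly non-obvious observation that a pigeonhole over forbidden subsets suffices to locate a consistent $X$ within the desired time budget.
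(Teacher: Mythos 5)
Your proposal is correct and follows essentially the same approach as the paper: the same two-case split (at least one positive sample vs.\ all-negative), the same key fact that a graph has only polynomially many split partitions and they are efficiently enumerable, and the same pigeonhole argument over the forbidden pool in the all-negative case. The only cosmetic difference is that you cite an $O(|V|)$ bound on the number of splits where the paper contentedly uses the weaker quadratic bound obtained from the observation that any two splits of a graph differ by moving at most one vertex in each direction; either bound suffices for the stated $O(|\mathcal{I}|^3)$ running time.
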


\begin{proof}
Let us consider an input instance $\calI=\{V, (E_i, \lambda_i)_{i\in [t]}\}$ of \CCsplit.
The algorithm first checks whether $\calI$ contains at least one positive sample or consists of negative samples only; each of these two cases will be handled by a separate procedure and arguments.

If $\mathcal{I}$ contains at least one positive sample, say w.l.o.g.\ $(E_1, \lambda_1)$, then the algorithm enumerates all possible splits of $G_1=(V,E_1)$. Indeed, for each pair $(X_1, Y_1)$, $(X_2, Y_2)$ of splits it holds that $(X_2, Y_2)$ can be obtained from $(X_1, Y_1)$ by moving at most one vertex from the clique part to the independent part, and at most one vertex from the independent part to the clique part. Hence after computing an arbitrary split in linear time (e.g., from its degree sequence)~\cite{HammerS81}, the algorithm can enumerate the set of all splits of $G_1$ in at most quadratic time. The algorithm then checks, for each split of $G_1$, whether it is a solution for $\mathcal{I}$---in particular, whether it is a split for each positive sample and a non-split for each negative sample. This check can be done in linear time. For correctness of this case, it suffices to observe that every solution for $\mathcal{I}$ must necessarily be a split of $G_1$.

For the case where each sample in $\mathcal{I}$ is negative, we exploit the fact that the set of all splits in a graph can be enumerated in polynomial time in a different manner. In particular, for each sample $(E_j,0)$ in $\mathcal{I}$, we construct the set $Q_j$ of all splits of $G_j=(V,E_j)$ in at most quadratic time; in particular, $Q_j$ will be empty if $G_j$ is not a split graph, and otherwise will contain at most a quadratic number of splits. Let $Q=\bigcup_{j\in [t]}Q_j$, and observe that $|Q|\leq |V|^2\cdot t$. The algorithm then proceeds by enumerating, in brute force and in arbitrary order, all possible vertex $2$-partitions of $V$, and for each such $2$-partition $(X,V\setminus X)$ it checks whether $(X,V\setminus X)\in Q$ or not. If $(X,V\setminus X)\in Q$, then we proceed to the next $2$-partition, while otherwise we output $(X,V\setminus X)$ as the solution; if every $2$-partition turns out to be in $Q$ then the algorithm outputs that no solution exists. Clearly, this procedure terminates after at most $|Q|+1$ steps. For correctness, it suffices to observe that every $2$-partition is a solution for $\mathcal{I}$ if and only if it does not lie in $Q$---indeed, every $2$-partition in $Q$ is a split for at least one negative sample (and hence cannot be a solution), and every $2$-partition that is not in $Q$ is not a split for any negative sample and hence is a solution.

\end{proof}

Naturally, one can formalize the learning problem for \CCsplit\ in an analogous way as was done for \textsc{$2$-Coloring Learning}. Since the hypothesis bound of \Cref{thm:red-learn-to-dec} holds here as well, \Cref{thm:splitcons} implies:

\begin{corollary}
\textup{\textsc{Split Graph Learning}} is efficiently PAC-learnable.
\end{corollary}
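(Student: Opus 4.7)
The plan is to apply Theorem~\ref{thm:red-learn-to-dec} to the consistency checking result of Theorem~\ref{thm:splitcons}, following the same template used in the excerpt for the transition from \CCcol to \textsc{$2$-Coloring Learning}. First, I would spell out the correspondence between the search and learning versions: a representation/hypothesis for \textsc{Split Graph Learning} is a vertex subset $X\subseteq V$, and it represents the concept $c_X$ that labels an edge set $E$ as positive iff $(X,V\setminus X)$ is a split of $(V,E)$. Thus the concept class for arity $n$ has at most $2^n$ representations, so $\log |\mathcal{H}_n| \le n$, which lies in $\poly(n)$ and in particular in $\fpt(n;k)$ for any (even trivial) parameterization $\kappa$.

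Second, Theorem~\ref{thm:splitcons} shows that \CCsplit is solvable in time $\bigoh(|\mathcal{I}|^3)$, hence is in $\P$ (and therefore in $\FPT$ with respect to any parameter). Plugging these two ingredients into Theorem~\ref{thm:red-learn-to-dec} immediately yields that \textsc{Split Graph Learning} lies in $\fptpactime$. Since the bridging theorem transforms a polynomial-time consistency checker into a learner whose running time is polynomial in $n$, $1/\varepsilon$, $1/\delta$ (with no parameter-dependent factor to hide), the conclusion strengthens to efficient PAC-learnability in the classical, non-parameterized sense.

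There is no real obstacle here, as everything reduces to verifying the hypotheses of the bridging theorem. The only minor point worth stating explicitly is that the semantics of labels agree on both sides of the reduction: the sample $(E,\lambda)$ is classified correctly by $X$ in the learning problem precisely when $X$ satisfies the consistency constraint for $(E,\lambda)$ in \CCsplit, so producing a hypothesis consistent with the polynomially many drawn samples (which the algorithm from Theorem~\ref{thm:splitcons} accomplishes) suffices, via the standard Occam/PAC argument captured inside Theorem~\ref{thm:red-learn-to-dec}, to guarantee the required $(\varepsilon,\delta)$-accuracy.
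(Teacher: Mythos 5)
Your proposal is correct and follows essentially the same route as the paper: invoke the polynomial-time consistency checker of Theorem~\ref{thm:splitcons}, note that the hypothesis space of vertex subsets $X\subseteq V$ gives $\log|\mathcal{H}_n|\le n$, and conclude via the bridging Theorem~\ref{thm:red-learn-to-dec}. The paper simply treats this as routine by analogy with the $2$-coloring case, whereas you spell out the same steps explicitly, including the observation that the absence of any parameter-dependent factor upgrades membership in $\fptpactime$ to classical efficient PAC-learnability.
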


Let us now conclude the section by revisiting the polynomial-time intractability of \CCcol\ through the lens of parameterized complexity theory. Naturally, there are many parameterizations one may consider in the setting---as an exercise that follows the same exhaustive-branching ideas as those used for \textsc{Vertex Cover}~\cite[Lemma 6.1]{BrandGanianSimonov23}, one could for instance attempt to parameterize by the size of the smaller color class in the sought-after coloring, whereas a fixed-parameter algorithm in this setting (based on exhaustive branching) would yield a \fptpactime\ algorithm for \textsc{$2$-Coloring Learning} in the corresponding parameterization of the concept. In this article, we instead showcase a less straightforward fixed-parameter algorithm for the problem when parameterized by the number of negative samples on the input (which in turn corresponds to a parameterization of the distribution in the learning setting~\cite{BrandGanianSimonov23}). It will later turn out that the same parameterization can be used to achieve fixed-parameter tractability for several other consistency checking problems as well, albeit the individual techniques used vary from problem to problem.

Let $t^-=|\{(E_i, \lambda_i)_{i\in[t]}~|~\lambda_i=0\}|$ be the number of negative samples in an input instance~$\mathcal{I}$.

\begin{theorem}
\label{thm:ColFPT}
\CCcol\ is fixed-parameter tractable when parameterized by the number $t^-$ of negative samples.
\end{theorem}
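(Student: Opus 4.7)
The plan is to recast the problem as finding a Boolean vector that avoids a union of $t^-$ affine subspaces, one per negative sample, and then deploy inclusion--exclusion over the negative samples to decide (and ultimately construct) such a vector in $\bigoh(2^{t^-}\cdot \mathrm{poly}(|\mathcal{I}|))$ time.

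\textbf{Step 1: encoding valid colorings.} First, I form the positive-edge graph $G^+ = (V, E^+)$ with $E^+ = \bigcup_{i:\lambda_i = 1} E_i$. Every consistent solution $X$ must be a proper $2$-coloring of $G^+$, so if $G^+$ is not bipartite the algorithm rejects. Otherwise, let $C_1,\ldots,C_c$ be the connected components of $G^+$ with fixed bipartitions $(A_i,B_i)$ (isolated vertices forming singleton components). The proper $2$-colorings of $G^+$ are then in bijection with $\{0,1\}^c$: $y_i$ selects whether $A_i$ or $B_i$ is placed in $X$. Writing $\chi_y$ for the induced coloring, the color $\chi_y(v)$ of any $v \in C_i$ is an affine $\mathbb{F}_2$-function of $y_i$.

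\textbf{Step 2: bad colorings form affine subspaces.} For each negative sample $(E_j,0)$ I require $\chi_y$ to be \emph{improper} on $(V,E_j)$, i.e.\ $y \notin P_j$, where $P_j$ is the set of $y$'s that properly $2$-color $(V,E_j)$. Since each edge $uv \in E_j$ contributes a single $\mathbb{F}_2$-equation in $(y_{c(u)}, y_{c(v)})$, $P_j$ is an affine subspace of $\mathbb{F}_2^c$ (possibly empty, in which case the sample is trivially satisfied and is discarded). The task reduces to producing some $y \in \{0,1\}^c \setminus \bigcup_{j=1}^{t^-} P_j$.

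\textbf{Step 3: inclusion--exclusion and recovery.} I would then compute
\[
\Bigl|\{0,1\}^c \setminus \bigcup_{j} P_j\Bigr| \;=\; \sum_{S \subseteq [t^-]} (-1)^{|S|} \Bigl|\bigcap_{j \in S} P_j\Bigr|,
\]
observing that $\bigcap_{j \in S} P_j$ is precisely the set of $y$'s inducing a proper $2$-coloring of the combined graph $G_S := (V,\, E^+ \cup \bigcup_{j \in S} E_j)$. This intersection has size $2^{c_S}$ when $G_S$ is bipartite with $c_S$ components, and $0$ otherwise, which is checkable by a single BFS per $S$. Summing over the $2^{t^-}$ subsets therefore decides existence in fixed-parameter time. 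To actually construct $X$, I plan standard self-reduction: sweep through the vertices in arbitrary order and, for each in turn, commit to one of the two colors by re-running the inclusion--exclusion check on the instance augmented with a constraint pinning that vertex's color; after at most $|V|$ commitments the entire vector $y$ is pinned down.

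\textbf{Expected main obstacle.} The key conceptual move is realizing that the forbidden sets $P_j$ are not arbitrary ``bad regions'' but affine subspaces in the component-based encoding, which is what makes inclusion--exclusion tractable. The residual technical difficulty lies in the self-reduction bookkeeping: committing a color for a vertex effectively adds an edge to $G^+$, which may merge components and thereby reshape both the $y$-encoding and the descriptions of the $P_j$'s. Treating each commitment as an additional linear equation folded into the affine system (rather than literally re-encoding the components) sidesteps this while preserving the invariant that every $P_j$ remains an affine subspace, and each recovery step still costs only $\bigoh(2^{t^-}\cdot \mathrm{poly}(|\mathcal{I}|))$, giving the claimed bound.
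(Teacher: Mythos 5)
Your proof is correct and takes a genuinely different route from the paper's. Both start identically: form $G^+$ from the union of positive samples, check bipartiteness, and encode the $2^c$ proper $2$-colorings of $G^+$ by a choice in $\{0,1\}^c$, one bit per connected component. From there the paper stays combinatorial: it defines a \emph{signature} for each negative sample (the set of pairs of component-parts joined by edges of that sample), splits signatures into small ($\le 16(t^-)^2$) and large, exhaustively branches on one pair per small signature, and then uses a pigeonhole argument to greedily absorb the large ones, giving running time $\bigoh(2^{16(t^-)^2}\cdot|\mathcal{I}|)$. You instead observe that each ``bad'' region $P_j\subseteq\mathbb{F}_2^c$ (the colorings that are \emph{proper} on negative sample $j$) is an affine subspace, and that $\bigcap_{j\in S}P_j$ is exactly the set of proper $2$-colorings of $G_S=(V,E^+\cup\bigcup_{j\in S}E_j)$, whose cardinality is a single bipartiteness-and-component-count check away. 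Inclusion--exclusion over the $2^{t^-}$ subsets then decides existence in $\bigoh(2^{t^-}\cdot\mathrm{poly}(|\mathcal{I}|))$ time, and your self-reduction (adding each color commitment as one more $\mathbb{F}_2$-linear constraint rather than re-encoding components, so every $P_j$ stays affine) recovers an explicit solution at the cost of one more polynomial factor. The payoffs differ: the paper's argument is elementary and avoids algebra, while your approach is substantially cleaner and improves the parameter dependence from $2^{\Theta((t^-)^2)}$ to $2^{t^-}$; the one nontrivial ingredient you had to supply — that affineness of the $P_j$ makes the intersections efficiently countable — is exactly what makes inclusion--exclusion usable and is handled correctly.
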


\begin{proof}
Let $G^+=(V,E^+)$ be the graph obtained from an input instance $\mathcal{I}=\{V,(E_i, \lambda_i)_{i\in[t]}\}$ of \CCcol\ by setting $E^+:=\bigcup_{i~|~\lambda_i=1}E_i$, i.e., $G^+$ is the (non-disjoint) union of all yes samples in $\mathcal{I}$. It will be useful to observe that the solution for $\mathcal{I}$ must also be a $2$-coloring for $G^+$. Let $\{C_1,\dots,C_\ell\}$ be the set of connected components of $G^+$; for each connected component $C_j$, $j\in \ell$, we check whether $C_j$ is bipartite. If any such component is not bipartite, we can correctly output that $\mathcal{I}$ does not admit a solution; otherwise we assume that each $C_j$ is bipartitioned into independent sets $(A_j,B_j)$ and proceed with the algorithm. 

Let us now consider a negative sample $(E_i,0)$. We define the \emph{signature} $S_i$ of $(E_i,0)$ as the set $\{\{X,Y\}~|~\exists xy\in E_i: (x\in X) \wedge (y\in Y) \wedge (X,Y\in \{A_1,\dots,A_\ell\}\cup \{B_1,\dots,B_\ell\}\}$; in other words, the signature is obtained by abstracting away the individual identies of the endpoints of $E_i$ and replacing this by information about which part of which component the endpoints belong to. We remark that the signature is treated as a set of two-element multisets to accommodate for the (admittedly trivial) case where $X=Y$.

We now distinguish two cases based on the size of the signature $S_i$ of each negative sample $(E_i,0)$. If $|S_i|>16(t^-)^2$ then we mark $S_i$ as \emph{large}, and otherwise we mark it as \emph{small}. Observe that each large signature must contain pairs from more than $2t^-$ components of $G^+$; indeed, the size of any signature that only contains pairs from $2t^-$ components of $G^+$ is upper-bounded by $(2t^-)^2\cdot 4$. We then perform exhaustive branching to identify a single pair $(X,Y)\in S_i$ in each small signature. In every branch, we proceed as as follows: 
\begin{enumerate}
\item For each pair $(X,Y)$ identified in a small signature, we add a new degree-$2$ vertex to  $G^+$ and make it adjacent to an arbitrary vertex in $X$ and an arbitrary vertex in $Y$. Observe that this ensures that a proper $2$-coloring of $G^+$ must use the same color for $X$ and $Y$. We mark all connected components which have been attached to a new degree-$2$ vertex in this way as \emph{used}.
\item We check if the subgraph of $G^+$ induced on the used connected components admits a proper $2$-coloring $\zeta$. If that is not the case, we proceed to the next branch. Otherwise, we apply the pigeon-hole principle to identify, for each large signature $S_j$, a pair $\{X_j,Y_j\}\in S_j$ such that either $X_j$ or $Y_j$ are from a component $C^*_j$ which is not marked as used and mark this component $C^*_j$ as used as well; this can be done via linear-time enumeration of all connected components since each large signature must contain pairs from more than $2t^-$ components of $G^+$. 
\item We now expand the proper $2$-coloring $\zeta$ to a proper $2$-coloring $\upsilon$ of all of $G^+$ as follows. We drop the coloring of the auxiliary degree-$2$ vertices constructed in the first step (as these are not part of $G^+$). Each connected component of $G^+$ will be properly $2$-colored, but for each pair $\{X_j,Y_j\}$ in a large signature $S_j$ selected in the previous step we ensure that $\upsilon$ uses the same color for $X_j$ and $Y_j$. this fully determines the two color classes for all connected components of $G^+$. We check that $\upsilon$ is a solution for $\mathcal{I}$ and output it if that is the case.
\end{enumerate}

This concludes the description of the algorithm. The running time is upper-bounded by $\bigoh(2^{16(t^-)^2}\cdot |\mathcal{I}|)$, since the exhaustive branching preceding the three steps listed above loops over $2^{16(t^-)^2}$ cases and the three steps can be carried out in linear time. For correctness, observe that every solution provided by the algorithm is a solution for $\mathcal{I}$. On the other hand, assume that $\mathcal{I}$ admits a solution. For each negative sample $(E_i,0)$ with a small signature, there must be at least one edge $x_iy_i\in E_i$ such that both $x_i$ and $y_i$ belong to the same part in the solution, and let $\{X_i,Y_i\}$ be the corresponding tuple in $S_i$. Consider the branch of the algorithm which selects $\{X_i,Y_i\}\in S_i$. The supergraph of $G^+$ constructed in this branch now admits a $2$-coloring $\zeta$. At that point the algorithm is guaranteed to succeed in finding a solution for $\mathcal{I}$, as the constructed $2$-coloring $\upsilon$ will be proper for all positive samples (as it is proper for $G^+$), will not be proper for negative samples with a small signature (as ensured already by $\zeta$), and will not be proper for negative samples with a large signature either (as guaranteed in the final step when constructing~$\upsilon$).
\end{proof}

\section{Consistency Checking for Selected Edge Search Problems}
\label{sec:edgesearch}

In this section, we perform a parameterized analysis of consistency checking for three natural and extensively studied edge search problems on graphs: \textsc{Matching}, \textsc{($k$-)Path} and \textsc{Edge Clique Cover}. We formalize the parameterized consistency checking formulations of these three problems below; recall that a set $\mathcal{F}=\{F_1,\dots,F_\ell\}$ is an \emph{edge clique cover} if each $F_i$, $i\in [\ell]$ is the edge set of a clique in the graph and each edge in the graph is contained in at least one $F_i$, $i\in [\ell]$~\cite[Subsection 2.2.3]{DBLP:books/sp/CyganFKLMPPS15}. 

\problempar{\CCmatch}{$\mathcal{I}=\{V,(E_i, \lambda_i)_{i\in[t]}\}$ where for each $i\in [t]$, $G_i=(V,E_i)$ is a graph and $\lambda_i\in \{0,1\}$,\\ and an integer $k$.}{
$k$.}{A set $F\subseteq {V\choose 2}$ of size $k$ such that for each $i\in[t]$, $F$ forms\\ a matching in $G_i$ if and only if $\lambda_i=1$.}

\problempar{\CCpath}{$\mathcal{I}=\{V,(E_i, \lambda_i)_{i\in[t]}\}$ where for each $i\in [t]$, $G_i=(V,E_i)$ is a graph and $\lambda_i\in \{0,1\}$,\\ and an integer $k$.}{
$k$.}{A set $F\subseteq {V\choose 2}$ of size $k$ such that for each $i\in[t]$, $F$ forms\\ a path in $G_i$ if and only if $\lambda_i=1$.}

\problempar{\CCEQC}{$\mathcal{I}=\{V,(E_i, \lambda_i)_{i\in[t]}\}$ where for each $i\in [t]$, $G_i=(V,E_i)$ is a graph and $\lambda_i\in \{0,1\}$,\\ and an integer $k$.}{
$k$.}{A set $\mathcal{F}\subseteq 2^{V\choose 2}$ of size $k$ such that for each $i\in[t]$, $\mathcal{F}$ forms\\ an edge clique cover in $G_i$ if and only if $\lambda_i=1$.}

An observant reader may notice that in the first of the three problems above, we consider solution size as a parameter even though the corresponding search problem of finding a maximum matching in a graph is polynomial-time tractable. This is due to the fact that, as it turns out, \textsc{Matching} in the consistency checking regime is not polynomial-time tractable unless $\P=\NP$. In fact, we show an even stronger (and more surprising) result:

\begin{theorem}
\label{thm:CCmatchhard}
\CCmatch does not admit a fixed-parameter algorithm unless $\FPT=\W[2]$.
\end{theorem}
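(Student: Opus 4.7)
The plan is to prove $\W[2]$-hardness by a parameterized reduction from \SCpr, which is well-known to be $\W[2]$-hard when parameterized by the solution size $k$. Given a \SCpr instance $(U,\mathcal{S},k)$, I would construct the vertex set $V=\{a_S,b_S\mid S\in\mathcal{S}\}$, pairing two fresh vertices with each set, and let the \emph{candidate edges} be $e_S=\{a_S,b_S\}$; these are pairwise vertex-disjoint by construction. The preserved parameter is $k$.

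The instance of \CCmatch would consist of a single positive sample $(E_+,1)$ with $E_+=\{e_S\mid S\in\mathcal{S}\}$, together with one negative sample $(E_u,0)$ for each universe element $u\in U$, where $E_u=\{e_S\mid u\notin S\}$. The intuition is that $E_+$ forces the solution to be a selection of $k$ candidate edges (interpreted as a choice of $k$ sets), while each $E_u$ enforces that at least one chosen set covers $u$.

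The key claims to verify are as follows. From the positive sample, any valid $F$ of size $k$ must satisfy $F\subseteq E_+$, so $F$ consists entirely of candidate edges; since the candidate edges are pairwise vertex-disjoint, every $k$-subset of them automatically forms a matching in $(V, E_+)$. Writing $\mathcal{F}=\{S\mid e_S\in F\}$, for each $u\in U$ we have $F\subseteq E_u$ iff every $S\in\mathcal{F}$ omits $u$, and (again by disjointness) $F$ is a matching in $(V,E_u)$ iff $F\subseteq E_u$. Hence $F$ is \emph{not} a matching in $(V,E_u)$ precisely when some chosen set contains $u$, i.e., when $u$ is covered by $\mathcal{F}$. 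Satisfying every negative-sample constraint is therefore equivalent to $\mathcal{F}$ being a set cover of $U$. Thus solutions to the constructed instance correspond bijectively to set covers of size $k$, and since the construction is polynomial and leaves $k$ unchanged, $\W[2]$-hardness follows; if a \fpt\ algorithm existed for \CCmatch, composing it with the reduction would yield an \FPT\ algorithm for \SCpr, contradicting $\FPT\neq\W[2]$.

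The main conceptual hurdle is getting the polarity of the negative samples right: one might naively try to make each $E_u$ \emph{contain} the edges of sets covering $u$, but this interacts poorly with the matching condition because the candidate edges are always pairwise disjoint and thus will never violate the ``no shared endpoint'' property by themselves. The crucial insight is to instead \emph{exclude} those edges from $E_u$, so that failure of the matching property is forced through the subset condition $F\not\subseteq E_u$, which elegantly captures coverage of $u$. A minor technicality is that \SCpr asks for a cover of size at most $k$, but any such cover can be padded with arbitrary sets to one of size exactly $k$ (assuming $|\mathcal{S}|\ge k$, which can be ensured by trivial preprocessing).
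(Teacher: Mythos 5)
Your proposal is correct, and it reduces from \SCpr\ exactly as the paper does; the core observation is also the same, namely that once the candidate edges are pairwise vertex-disjoint, $F$ fails to be a matching in a negative sample $(V,E_u)$ if and only if $F\not\subseteq E_u$, so \emph{excluding} the edges of sets containing $u$ forces $u$ to be covered. Where you diverge is in the gadget: the paper's positive sample is a union of $k$ disjoint stars, the $i$-th with center $s^i$ and pendants $p^i_1,\dots,p^i_m$ (one per set in $\calF$), so a $k$-matching must take exactly one pendant edge per star, and the $k$ choices encode a (possibly repeating) selection of $k$ sets; your positive sample encodes $\calF$ only once as $m$ pairwise-disjoint edges $e_S$, and any $k$-subset of these is automatically a matching. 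Your gadget is leaner, using $2m$ vertices rather than $k(m+1)$ and avoiding the $k$-fold duplication of the set system; the paper's star gadget has the minor advantage of absorbing the ``at most $k$ versus exactly $k$'' discrepancy for free (repeated pendant indices correspond to repeated sets), whereas you handle it by padding, which is also sound. Both are valid parameterized reductions leaving $k$ unchanged, so both yield the stated $\W[2]$-conditional lower bound.
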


\begin{proof}

We present a reduction that given an instance $(\calU, \calF, k')$ of the classical \SCpr problem~\cite{DBLP:books/sp/CyganFKLMPPS15}, constructs an instance $\mathcal{I}$ of \CCmatch\ which admits a solution if and only if $(\calU, \calF, k')$ is a yes-instance.
An instance $(\calU, \calF, k')$ of \SCpr is a family $\calF=\{F_1,\dots,F_m\}$ of $m$ subsets over the $n$-element universe $\calU=\{u_1,\dots,u_n\}$, and we are asked whether there exists a $k'$-element subset of $\calF$ whose union contains all of $\calU$.

\vspace{0.2cm}\noindent\textit{Construction.}\quad
We construct the instance $\mathcal{I}=\{V,(E_i, \lambda_i)_{i\in[t]}\}$ of \CCmatch\ as follows, with the parameter set to $k=k'$. Let the unique positive sample in $\mathcal{I}$ be the edge set $E_1$ such that the graph $(V,E_1)$ is a set of $k$ disjoint stars, whereas for each $i\in [k]$ the graph $(V,E_1)$ contains a center $s^i$ adjacent to pendants $p^i_1,\dots,p^i_m$. Next, for each element $u_j\in \calU$, $j\in [n]$, we add a negative sample $(V,E_{j+1})$ into $\mathcal{I}$ which only contains non-edges between the centers of stars and the leaves (of the same star) corresponding to the sets containing that element; formally, $E_{j+1}={V\choose 2}\setminus \{s^ip^i_{\ell}~|~i\in[k], u_j\in F_\ell\}$. This completes the construction of $\mathcal{I}$ (see also \Cref{fig:matching}).

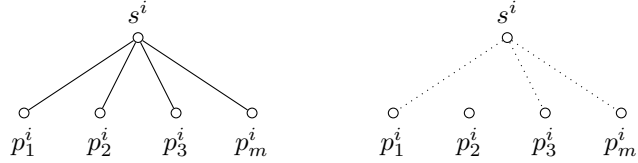
\begin{figure}[ht]
\centering
\begin{tikzpicture}

\node[circ, label=above:{$s^i$}] (t) at (1.5,1) {};
\node[circ, label=below:{$p^i_1$}] (1) at (0,0) {};
\node[circ, label=below:{$p^i_2$}] (2) at (1,0) {};
\node[circ, label=below:{$p^i_3$}] (3) at (2,0) {};
\node[circ, label=below:{$p^i_m$}] (m) at (3,0) {};

\draw (t) -- (1)
(t) -- (2)
(t) -- (3)
(t) -- (m); 

\end{tikzpicture}
\hspace{1cm}
\begin{tikzpicture}

\node[circ, label=above:{$s^i$}] (t) at (1.5,1) {};
\node[circ, label=below:{$p^i_1$}] (1) at (0,0) {};
\node[circ, label=below:{$p^i_2$}] (2) at (1,0) {};
\node[circ, label=below:{$p^i_3$}] (3) at (2,0) {};
\node[circ, label=below:{$p^i_m$}] (m) at (3,0) {};

\draw[dotted] (t) -- (1)
(t) -- (3)
(t) -- (m);
\end{tikzpicture}
\caption{Reducing from \SCpr, the \CCmatch instance has a positive sample with $k$ ($i\in [k]$) stars as shown on the left; for each $u_j\in \calU$, the correspondent \no-instance is a complete graph but excluding $\{s^ip^i_{\ell}~|~i\in[k], u_j\in F_\ell\}$. So, as an example, if $m=4$ and $u_{j}\in F_{\ell}$ for any $\ell\in \{1, 3, m\}$, then  for all $i\in [k]$, the dotted edges are out of the construction.}
\label{fig:matching}
\end{figure}

\vspace{0.2cm}\noindent\textit{Correctness.}\quad
If $\mathcal{I}$ admits a solution $Q$, then $Q$ must be a matching in $(V,E_1)$ of size $k$ and hence can only contain a single edge from each of the $k$ stars. Hence, $Q=\{s^1p_{\alpha(1)}, s^2p_{\alpha(2)},\dots, s^kp_{\alpha(k)}\}$ for some mapping $\alpha$. Moreover, since $Q$ is not a matching in $(V,E_{j+1})$ for any $j\in [n]$, the set $\{F_{\alpha(1)},\dots,F_{\alpha(k)}\}$ is a set cover for $(\calU, \calF, k)$. At the same time, given a set cover $\{F_{\beta(1)},\dots,F_{\beta(k)}\}$ (for some mapping $\beta$), we can construct a solution for $\mathcal{I}$ by taking $\{s^1p_{\beta(1)},\dots,s^kp_{\beta(k)}\}$. This yields a reduction from \SCpr\ to the problem of deciding the existence of a solution for \CCmatch; in particular, this means that a fixed-parameter algorithm for \CCmatch\ would imply \FPT=W[2].
\end{proof}

A similar reduction also allows us to establish the intractability of consistency checking for \textsc{Path}.

\begin{theorem}
\label{thm:CCpathhard}
\CCpath does not admit a fixed-parameter algorithm unless $\FPT=\W[2]$.
\end{theorem}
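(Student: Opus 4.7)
The plan is to carry out a reduction from \SCpr\ in the spirit of \Cref{thm:CCmatchhard}, replacing the $k$ disjoint stars with a bipartite ``ladder'' gadget that forces any long simple path to traverse a sequence of choice vertices.

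Given a \SCpr\ instance $(\calU, \calF, K)$ with $\calF = \{F_1, \dots, F_m\}$, I would set $k' = K-1$ and $k = 2k'$. Introduce backbone vertices $v_0, v_1, \dots, v_{k'}$ and choice vertices $w^i_j$ for $i \in [k']$ and $j \in [m]$. The unique positive sample has edge set
\[
E_1 = \{v_{i-1} w^i_j,\; w^i_j v_i : i \in [k'],\; j \in [m]\},
\]
so that $m$ internally vertex-disjoint length-$2$ paths join each consecutive pair of backbones, one per set of $\calF$. For each element $u_a \in \calU$, add a negative sample $(E_{a+1}, 0)$ with
\[
E_{a+1} = \binom{V}{2} \setminus \{v_{i-1} w^i_j,\; w^i_j v_i : i \in [k'],\; u_a \in F_j\}.
\]

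Correctness hinges on a structural description of the $k$-edge subsets $F$ that form a simple path in $(V, E_1)$. Since $(V, E_1)$ is bipartite between backbones and choice vertices with every $w^i_j$ of degree exactly $2$, a parity and counting argument shows $F$ must be of one of two types: (A) a \emph{standard} path $v_0 - w^1_{j_1} - v_1 - w^2_{j_2} - \dots - v_{k'}$ visiting all $k'+1$ backbones in order via one $w$-vertex per level, or (B) an \emph{exotic} path that skips exactly one of the endpoint backbones---never a middle one, since no $w$-vertex bridges two non-consecutive backbones---and visits $k'+1$ choice vertices. In either case, the set indices labeling the choice vertices of $F$ yield a multiset of at most $k'+1 = K$ elements of $[m]$. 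The negative sample $(E_{a+1}, 0)$ rejects $F$ iff some edge of $F$ is excluded, iff some such set index $j$ satisfies $u_a \in F_j$, so $F$ is a consistent solution iff the chosen set indices cover $\calU$.

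Putting these observations together, the \CCpath\ instance has a solution iff $\calF$ admits a cover of $\calU$ of size at most $K$: the reverse direction is realized either by a standard path (when the cover has at most $k'$ distinct elements, padded to $k'$ arbitrarily) or by a type-(B) path (when the cover has exactly $K$ distinct elements, placed on the $k'+1$ choice vertices). Since $k = 2K - 2 = \mathcal{O}(K)$ and \SCpr\ parameterized by cover size is $\W[2]$-hard, this yields the claimed $\W[2]$-hardness of \CCpath. The main technical obstacle I foresee is proving the structural characterization rigorously---specifically, ruling out simple paths of length $2k'$ that would skip a middle backbone or exceed the $k'+1$ bound on choice vertices; the key is that any such configuration would force a $w$-vertex to be incident in the path to two non-consecutive backbones, contradicting its graph neighborhood $\{v_{i-1}, v_i\}$.
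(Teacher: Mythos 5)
Your proposal takes essentially the same route as the paper---a reduction from \SCpr using a bipartite ``ladder'' gadget as the unique positive sample and one almost-complete edge set per universe element as a negative sample---but you have correctly identified and repaired a subtlety that the paper's own correctness argument glosses over. The paper asserts that any solution path ``must be a path which consecutively visits vertices $(s_1,p^2_{\alpha(1)},s_3,\dots,s_{2k+1})$,'' i.e., a ``standard'' path, yet this is false: the ladder also admits what you call exotic simple $2k'$-edge paths, e.g.\ $p^2_{j_1},s_1,p^2_{j_2},s_3,p^4_{j_3},\dots$, which use $k'+1$ choice vertices and only $k'$ backbones. In the paper's parameter setting (number of choice levels equal to the Set Cover budget $K$), such a path can certify a cover of size $K+1$, giving a false positive exactly when the optimum is $K+1$; a two-element universe with two singleton sets and $K=1$ already witnesses this. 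Your choice $k'=K-1$ is precisely the adjustment that makes both standard paths ($K-1$ choice vertices) and exotic paths ($K$ choice vertices) certify covers of size at most $K$, and your structural claim---that an exotic path can omit only an endpoint backbone, never an interior one---is correct, since the backbone indices visited form a sequence of distinct values with unit steps (every $w^i_j$ has neighborhood exactly $\{v_{i-1},v_i\}$), hence monotone, so the one missing backbone index is $0$ or $k'$. Two minor points you would want to handle explicitly in a full write-up: the case $K=1$ degenerates (no choice level at all) and should be dispatched separately, and padding a cover up to $k'$ or $K$ sets tacitly assumes $|\calF|$ is large enough, which can be ensured without loss of generality.
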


\begin{proof}
We present a reduction that given an instance $(\calU, \calF, k')$ of \SCpr, constructs an instance $\mathcal{I}$ of \CCpath\ which admits a solution if and only if $(\calU, \calF, k)$ is a yes-instance.
Recall that an instance $(\calU, \calF, k)$ of \SCpr is a family $\calF=\{F_1,\dots,F_m\}$ of $m$ subsets over the $n$-element universe $\calU=\{u_1,\dots,u_n\}$, and we are asked whether there exists a $k$-element subset of $\calF$ whose union contains all of $\calU$.

\vspace{0.2cm}\noindent\textit{Construction.}\quad
We construct the instance $\mathcal{I}=\{V,(E_i, \lambda_i)_{i\in[t]}$ of \CCpath\ as follows, with the parameter $k$ set to $2k'$. Let the unique positive sample in $\mathcal{I}$ be the edge set $E_1$ such that the graph $(V,E_1)$ is a $(2k+1)$-partite graph consisting of independent sets $S_1,\dots,S_{2k+1}$. For each odd $j\in [2k+1]$, the set $S_j$ contains a single vertex $s_j$. For each even $j\in[2k+1]$, the set $S_j$ contains one vertex for each set in $\calF$, and in particular $S_j:=\{p^j_1,\dots,p^j_m\}$. $E_1$ is then the set of all edges connecting consecutive sets in this partition of $V$, i.e., $E_1=\{s_{2j-1}p^{2j}_\ell~|~\ell\in[m], j\in [k]\} \cup \{p^{2j}_\ell s_{2j+1}~|~\ell\in[m], j\in[k]\}$ (see in \Cref{fig:path}). Next, for each element $u_j\in \calU$, $j\in [n]$, we add a negative sample $(V,E_{j+1})$ into $\mathcal{I}$ which only contains non-edges incident to the sets containing that element; formally, $E_{i+1}={V\choose 2}\setminus \{vp^{2j}_{\ell}~|~j\in[k], u_i\in F_\ell\}$. This completes the construction of $\mathcal{I}$.

\begin{figure}[ht]
\centering
\begin{tikzpicture}

\node[circ, label=left:{$s_1$}] (1) at (0,2) {};
\node[circ, label=above:{$p^2_1$}] (p1) at (1,4) {};
\node[circ, label=above:{$p^2_2$}] (p2) at (1,3) {};
\node[label=above:{$\vdots$}] (p3) at (1,1.15) {};
\node[circ, label=below:{$p^2_m$}] (p4) at (1,1) {};

\node[circ, label=below:{$s_3$}] (2) at (2,2) {};
\node[circ, label=above:{$p^4_1$}] (c1) at (3,4) {};
\node[circ, label=above:{$p^4_2$}] (c2) at (3,3) {};
\node[label=above:{$\vdots$}] (c3) at (3,1.15) {};
\node[circ, label=below:{$p^4_m$}] (c4) at (3,1) {};

\node[circ, label=below:{$s_5$}] (3) at (4,2) {};

\node (q1) at (4.5,3.5) {};
\node (q2) at (4.5,2.5) {};
\node (q4) at (4.5,1.5) {};
\node (w1) at (5.5,3.5) {};
\node (w2) at (5.5,2.5) {};
\node (w4) at (5.5,1.5) {};

\node[label=above:{$\dots$}] (t) at (5,2) {};

\node[circ, label=below:{$s_{2k-1}$}] (4) at (6,2) {};
\node[circ, label=above:{$p^{2k}_1$}] (r1) at (7,4) {};
\node[circ, label=below:{$p^{2k}_2$}] (r2) at (7,3) {};
\node[label=above:{$\vdots$}] (r3) at (7,1.15) {};
\node[circ, label=below:{$p^{2k}_m$}] (r4) at (7,1) {};
\node[circ, label=right:{$s_{2k+1}$}] (5) at (8,2) {};

\draw (1) -- (p1)
(1) -- (p2)
(1) -- (p4)
(2) -- (p1)
(2) -- (p2)
(2) -- (p4)
(2) -- (c1)
(2) -- (c2)
(2) -- (c4)
(3) -- (c1)
(3) -- (c2)
(3) -- (c4)
(3) -- (q1)
(3) -- (q2)
(3) -- (q4)
(4) -- (w1)
(4) -- (w2)
(4) -- (w4)
(4) -- (r1)
(4) -- (r2)
(4) -- (r4)
(5) -- (r1)
(5) -- (r2)
(5) -- (r4);
\end{tikzpicture}
\caption{A unique positive sample for an instance of \CCpath.}
\label{fig:path}
\end{figure}

\vspace{0.2cm}\noindent\textit{Correctness.}\quad
If $\mathcal{I}$ admits a solution $Q$, then $Q$ must be a path in $(V,E_1)$ of size $2k$ and hence must be a path which consecutively visits vertices $(s_1,p^2_{\alpha(1)},s_3,p^4_{\alpha(2)},\dots,p^{2k}_{\alpha(k)},s_{2k+1})$. Moreover, since $Q$ is not a path in $(V,E_{j+1})$ for any $j\in [n]$, the set $\{F_{\alpha(1)},\dots,F_{\alpha(k)}\}$ is a set cover for $(\calU, \calF, k)$. At the same time, given a set cover $\{F_{\beta(1)},\dots,F_{\beta(k)}\}$ (for some mapping $\beta$), we can construct a solution for $\mathcal{I}$ by taking $(s_1,p^2_{\beta(1)},s_3,p^4_{\beta(2)},\dots,p^{2k}_{\beta(k)},s_{2k+1})$. This yields a reduction from \SCpr\ to the problem of deciding the existence of a solution for \CCpath; in particular, this means that a fixed-parameter algorithm for \CCpath\ would imply \FPT=W[2].
\end{proof}

However, we show that the third problem under consideration---\textsc{Edge Clique Cover}---does not become more difficult in the consistency checking regime.

\begin{theorem}
\label{thm:CCEDS}
\CCEQC admits a fixed-parameter algorithm which runs in time $\bigoh(2^{2^k}\cdot |\mathcal{I}|)$.
\end{theorem}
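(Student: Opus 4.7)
My approach exploits a strong structural observation: if $\mathcal{F}$ is any valid solution, then the set $\bigcup \mathcal{F}$ of pairs covered by $\mathcal{F}$ is fixed, and for every positive sample $(E_i, 1)$, being an edge clique cover requires exactly $E_i = \bigcup \mathcal{F}$ (each clique must consist of edges of $G_i$, and the cliques together must cover every edge of $G_i$). Consequently, if two positive samples have distinct edge sets, no solution can exist. This immediately collapses the positive-sample side of the problem to a single graph $G^+ = (V, E^+)$.

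The algorithm then proceeds in three steps. \emph{Step 1:} Scan all positive samples and verify that they share a common edge set $E^+$; otherwise, output no. \emph{Step 2:} Scan all negative samples and verify that none of them has edge set equal to $E^+$ (since any solution $\mathcal{F}$ would cover exactly the edges of $G^+$ and would then automatically be an ECC of any $G_j$ with $E_j = E^+$, which is forbidden). \emph{Step 3:} Invoke the classical fixed-parameter algorithm for \textsc{Edge Clique Cover} on $G^+ = (V, E^+)$ with parameter $k$, e.g.\ via the $2^k$-vertex twin kernel of Gramm, Guo, H{\"u}ffner, and Niedermeier followed by brute-force search, yielding a running time of $\bigoh(2^{2^k} \cdot |V|^{\bigoh(1)})$ and hence $\bigoh(2^{2^k} \cdot |\mathcal{I}|)$ overall.

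The correctness of Step 3 requires showing that an ECC $\mathcal{F}$ of $G^+$ of size $k$ is automatically a valid solution for the whole instance $\mathcal{I}$. For positive samples this is immediate, as each $G_i$ coincides with $G^+$. For a negative sample $(E_j, 0)$ with $E_j \neq E^+$, we argue: if there is an edge in $E^+ \setminus E_j$, then some $F \in \mathcal{F}$ contains a pair not in $E_j$, so $F$ is not a clique in $G_j$; if instead $E_j \setminus E^+ \neq \emptyset$, then $\bigcup \mathcal{F} = E^+$ misses an edge of $G_j$. In either case $\mathcal{F}$ is not an ECC of $G_j$, as desired. The edge case where $\mathcal{I}$ contains no positive samples can be handled by a short separate argument (e.g., outputting a tailored small family $\mathcal{F}$ whose edge-union differs from every $E_j$, which is always possible if $|V|$ is large enough, and otherwise via direct enumeration).

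The main conceptual step is the uniqueness observation for positive samples, which I expect to be the crux of the argument; once it is in place, the remaining reduction to classical \textsc{Edge Clique Cover} is immediate, and no new algorithmic machinery beyond known results is needed. The only mild subtlety is a clean treatment of the corner case with zero positive samples, but this does not affect the stated running time.
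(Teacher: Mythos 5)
Your main argument is essentially the paper's: the key observation that an edge clique cover of $(V,E_1)$ can never be an edge clique cover of $(V,E_2)$ for $E_2\neq E_1$, the consequence that all positive samples must share a single edge set $E^+$, the use of the Gramm--Guo--H\"uffner--Niedermeier kernel plus exhaustive search on $G^+$, and the reduction of the negative-sample check to the condition $E_j\neq E^+$. That part is fine and matches the paper's proof step for step.

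The one place where your sketch has a genuine problem is the zero-positive-samples case. You propose outputting a family $\mathcal{F}$ \emph{of cliques} whose edge-union differs from every $E_j$, claiming this is ``always possible if $|V|$ is large enough.'' This is false as stated: the relevant threshold is not a function of $|V|$ and $k$ alone, because the number $t$ of negative samples can be so large (even with $|V|$ just above $k$) that every achievable union $\bigcup\mathcal{F}$ over clique families collides with some $E_j$ --- e.g., when $|V|=k+1$ and the $E_j$ range over all edge sets on $V$. In that regime ``direct enumeration'' over small $|V|$ is not available either, since $|V|>k$ is not bounded by the parameter. The paper avoids this entirely with a different trick: when $|V|>k$ it outputs a family containing a set such as $F_1=\{v_1v_2,\,v_1v_3\}$ that is not the edge set of a complete subgraph in \emph{any} graph, so $\mathcal{F}$ is vacuously not an edge clique cover of any $G_j$ and is therefore consistent with all negative samples; the remaining case $|V|\le k$ is handled by brute force. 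You should replace your ``union differs from every $E_j$'' construction with such a vacuous witness, since the former cannot be made to work in FPT time across all inputs.
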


\begin{proof}
We begin by observing that an edge clique cover $\mathcal{F}\subseteq 2^{V\choose 2}$ for a graph $(V,E_1)$ cannot be an edge clique cover for any graph $(V,E_2)$ such that $E_2\neq E_1$. Indeed, for each edge $e\in E_2\setminus E_1$ it holds that $e$ cannot be covered by $\mathcal{F}$, while every edge $e\in E_1\setminus E_2$ must be covered by some clique $F\in \mathcal{F}$ in $(V,E_1)$ and hence $F$ would no longer be a clique in $(V,E_2)$. Hence if an instance $\mathcal{I}$ of \CCEQC contains two distinct positive samples, the algorithm can correctly output that $\mathcal{I}$ has no solution.

So, let us consider the case where $\mathcal{I}$ contains precisely one positive sample, say $(E_1, 1)$.
There is a well-known set of simple rules that can reduce the number of vertices of $(V,E_1)$ to $2^k$ \cite{Gramm_ECC}; in fact, under the Exponential Time Hypothesis~\cite{ImpagliazzoPZ01} this reduction combined with a brute-force exhaustive search on the reduced instance produces an essentially optimal algorithm for \textsc{Edge Clique Cover}~\cite{CPP_ECC}. We apply this procedure to either identify, in $2^{2^k}\cdot |V|$ time, an edge clique cover $\mathcal{F}$ of cardinality $k$ for $(E_1, 1)$, or correctly determine that no such $\mathcal{F}$ exists (in which case the algorithm can, as before, correctly output that $\mathcal{I}$ has no solution). The algorithm then simply checks to ensure $\mathcal{F}$ is not an edge clique cover for any of the negative samples, or equivalently, checks that there is no negative sample of the form $(E_1,0)$. If this check succeeds, the algorithm outputs the solution $\mathcal{F}$.

Finally, for the case where $\mathcal{I}$ contains only negative samples, let us consider $\calG=(V, \bigcap_{i\in[t]}E_i)$. If $|V|\leq k$, we can apply exhaustive branching to check each potential choice of $\mathcal{F}$, resulting in an algorithm with running time $2^{2^{k}}$. Otherwise, choose an arbitrary set of distinct vertices $v_1,\dots,v_k\in V$. Set $\mathcal{F}=\{F_1,\dots,F_k\}$ where $F_1=\{v_1v_2,v_1v_3\}$, and for each $F_i$, $2\leq i\leq k$ we use $F_i=\{v_iv_{i+1}\}$.
\end{proof}

Given the fixed-parameter intractability of \CCmatch and \CCpath\ w.r.t.\ the solution size alone, it is natural to ask whether one could solve these problems at least when the number of negative samples is small, similarly as was done in~\Cref{thm:ColFPT} for $2$-\textsc{Coloring}. We conclude this section by answering this question positively, albeit the algorithmic techniques used here are different from~\Cref{thm:ColFPT}. In fact, it turns out that an adaptation of the classical color-coding technique suffices in this case~\cite[Subsections 5.2 and 5.6]{DBLP:books/sp/CyganFKLMPPS15}.
For both problems, the task essentially boils down to intersecting all positive samples into one, and then looking for a solution where the set of $k$ edges is not contained in any negative sample. After assuming that all vertices of the solution receive distinct colors, we can perform dynamic programming to find a colorful solution, and while doing so we also store information about which negative samples are already ``dealt with'', i.e., which negative samples do not contain the edges in the partial solution.
We provide the formal proofs of both results below.

\begin{theorem}
\label{thm:CCmatchfpt}
\CCmatch admits an algorithm which runs in time $2^{\bigoh(k + t^-)} \cdot |\mathcal{I}|^{\bigoh(1)}$; in particular, the problem is fixed-parameter tractable when parameterized by $k+t^-$.
\end{theorem}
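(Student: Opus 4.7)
The plan is to follow the color-coding paradigm hinted at in the paragraph preceding the theorem.

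First I would reformulate consistency as a clean combinatorial condition. Let $E^+ := \bigcap_{i:\lambda_i=1} E_i$, or $E^+ := {V \choose 2}$ if $\mathcal{I}$ has no positive samples. I claim that a $k$-element set $F \subseteq {V \choose 2}$ is a valid solution exactly when (i) the edges of $F$ are pairwise vertex-disjoint, (ii) $F \subseteq E^+$, and (iii) for every negative sample $j$, some edge of $F$ is missing from $E_j$. Items (i) and (ii) together are exactly the requirement that $F$ be a matching in every positive sample graph $G_i$. Item (iii) uses the key observation that, once $F$ is already vertex-disjoint, it fails to be a matching in $G_j = (V, E_j)$ if and only if $F \not\subseteq E_j$: a missing edge is the only available way to spoil the matching property.

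Next I would color the vertices of $V$ independently and uniformly at random with colors from $[2k]$. For any fixed target matching on $2k$ endpoints, these receive pairwise distinct colors with probability $(2k)!/(2k)^{2k} \ge 2^{-\bigoh(k)}$, so it suffices to search for a \emph{colorful} partial matching (at most one vertex per color), which automatically guarantees vertex-disjointness. The dynamic programming table would be indexed by pairs $(S, T)$ with $S \subseteq [2k]$ and $T \subseteq [t^-]$: the entry is true iff there is a colorful partial matching contained in $E^+$ whose set of endpoint colors is exactly $S$ and such that, for every $j \in T$, at least one edge of the partial matching is absent from $E_j$. We initialise $(\emptyset, \emptyset)$ to true, and from $(S, T)$ we extend by any edge $uv \in E^+$ satisfying $c_u, c_v \notin S$ and $c_u \ne c_v$, producing $(S \cup \{c_u, c_v\}, \; T \cup \{j : uv \notin E_j\})$. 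A consistent solution exists under the current coloring iff $([2k], [t^-])$ is reachable; the matching itself is recovered by standard traceback. The table contains $2^{2k} \cdot 2^{t^-}$ entries, each updated in time $|\mathcal{I}|^{\bigoh(1)}$, giving $2^{\bigoh(k + t^-)} \cdot |\mathcal{I}|^{\bigoh(1)}$ time per coloring.

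Finally I would derandomise by replacing the random coloring with a $(|V|, 2k)$-perfect hash family of size $2^{\bigoh(k)} \log |V|$ and running the DP once per coloring in the family, which adds only a $2^{\bigoh(k)}$ factor and delivers the deterministic bound stated in the theorem. The main obstacle, as I see it, is not any single step but rather arranging the DP state so that both the global ``be a matching'' obligation and the $t^-$ individual ``violate $G_j$'' obligations fit within $2^{\bigoh(k+t^-)}$ space; what makes this cheap is precisely that colorfulness enforces vertex-disjointness across \emph{all} sample graphs simultaneously, so beyond the $2^{2k}$ color subsets one needs only the $t^-$ per-negative-sample violation flags.
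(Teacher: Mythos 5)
Your algorithm is essentially the paper's: intersect the positive samples into one graph, color the vertices with $2k$ colors, run a dynamic program indexed by (set of used colors, set of negative samples already violated), and derandomize with a perfect hash family. The DP transition and the correctness argument match what the paper does.

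However, there is a genuine gap in your opening reformulation. You claim that $F$ is a valid solution \emph{exactly when} (i) $F$ is vertex-disjoint, (ii) $F \subseteq E^+$, and (iii) every negative sample misses at least one edge of $F$, and you justify (i)+(ii) as ``exactly the requirement that $F$ be a matching in every positive sample graph.'' That equivalence is correct only when $\mathcal{I}$ has at least one positive sample: a set $F$ that is a matching in \emph{some} graph is automatically vertex-disjoint, which is where (i) comes from. When there are \emph{no} positive samples, the empty conjunction imposes no constraint on $F$ at all, yet your conditions still force $F$ to be vertex-disjoint. This is strictly stronger, so your characterization is not an ``exactly when.'' Concretely: suppose there are no positive samples, $k=2$, $|V|\geq 3$, and a single negative sample $E_1 = \binom{V}{2}$. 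Then $F = \{ab, ac\}$ is a valid consistent solution (it shares a vertex, hence is not a matching in $G_1$), but no vertex-disjoint $F \subseteq E^+ = \binom{V}{2}$ can violate $E_1$, so your DP reports that no solution exists. The paper avoids this by treating $t^- = t$ as a separate base case (for $|V|\geq 3$, output a $k$-set containing two pairs that share a vertex, which fails to be a matching in every sample). You need to add that special case before your reformulation and DP can be applied.

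A minor secondary point: your stated table semantics (``for every $j \in T$, at least one edge of $F$ is absent from $E_j$'', i.e. $T$ is a \emph{subset} of the violated samples) does not quite match what the forward transitions actually compute (the \emph{exact} violated set), but since only $([2k],[t^-])$ is queried, no harm results. If you want the written semantics to match the algorithm, define $T$ to be the exact set of violated negative samples, as the paper does.
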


\begin{proof}
    Let $(V, k, (E_1, \lambda_1), \ldots, (E_t, \lambda_t))$, be the input of \CCmatch; w.l.o.g. assume that $\lambda_1 = \ldots = \lambda_{t^-} = 0$, and $\lambda_{t^- + 1} = \ldots = \lambda_{t} = 1$.
    First, consider the special case where $t^- = t$. If $|V| \ge 3$, return a set $F$ of two vertex pairs that share a vertex, which ensures that $F$ is not a matching in any sample. If $|V| \le 2$, the problem is trivial.

    We now have that $t^- < t$. Clearly, a set $F \subseteq \binom{V}{2}$ of size $k$ is a solution if and only if (1) the pairs in $F$ are vertex-disjoint, (2) for every $i \in [t^- + 1, t]$, $F \subseteq E_i$, and (3) for every  $i \in [t^-]$, $F \nsubseteq E_i$.
    The instance is thus equivalent to $(V, k, (E_1, 0), \ldots, (E_{t^-}, 0), (E, 1))$, where $E = E_{t^- + 1} \cap \cdots \cap E_t$, i.e., an instance obtained from the original one by intersecting all positive samples. A solution is now a matching $F$ of size $k$ in $(V, E)$ such that for every  $i \in [t^-]$, $F \nsubseteq E_i$.

    Assume now that the vertex set $V$ is colored in $2k$ colors with a mapping $c: V \to [2k]$, and we are looking for a colorful solution, i.e., a matching $M$ such that in the set of vertices of $M$ each color appears exactly once, in addition to the properties of the solution above. We show how to find such a colorful solution with the help of dynamic programming.

    For $C \subseteq [2k]$, $I \subseteq[t^-]$ let $\alpha(C, I)$ be $1$ if there exists a matching $M$ in $(V, E)$ such that its vertex set has exactly the colors in $C$, and such that $i \in I$ if and only if $M \nsubseteq E_i$; $\alpha(C, I) = 0$ if there is no such matching. We compute the values $\alpha(C, I)$ in the order of increasing $|C|$. To initialize, we set $\alpha(\emptyset, \emptyset) = 1$ and $\alpha(\emptyset, I) = 0$ for all $I \subseteq [t^-]$, $I \ne \emptyset$. This is clearly correct, as the only matching whose vertex set contains no colors is the empty matching, and the empty matching is a subset of every set of edges $E_i$, $i \in [t^-]$.

    Now consider $C \subseteq[2k]$ with $|C| > 0$, $I \subseteq [t^-]$, and assume $\alpha(C', I')$ is correctly computed for all $|C'| < |C|$ and all $I' \subseteq [t^-]$. For an edge $e \in E$, denote by $\beta(e) \subseteq [t^-]$ the set of negative samples that are ``dealt with'' by $e$, i.e., $\beta(e) = \{i \in [t^-] : e \notin E_i\}$. We set
    \begin{equation}
    \alpha(C, I) = \max_{\substack{uv \in E : c(u), c(v) \in C \\ I' \subseteq I: I' \cup \beta(uv) = I}} \alpha(C \setminus\{c(u), c(v)\}, I').
        \label{eq:dp_matching}
    \end{equation}

    We now argue the correctness of the computation above. First, assume $\alpha(C, I) = 1$, and consider a corresponding solution $M$. Let $uv \in M$ be an arbitrary edge of the solution, and let $I'$ be the set of negative samples ``dealt with'' by the remaining edges of $M$, i.e., $I' = \bigcup_{e \in M \setminus \{uv\}} \beta(e)$.
    Since $M$ is a colorful matching, the endpoints of edges in $M \setminus \{uv\}$ have colors in $C \setminus \{c(u), c(v)\}$. Therefore, $\alpha(C \setminus\{c(u), c(v)\}, I') = 1$ as $M \setminus \{uv\}$ is a suitable solution. Then also $\alpha(C, I) = 1$  by \eqref{eq:dp_matching}, since $uv \in E$, $c(u), c(v) \in C$ and $I' \cup \beta(uv) = I$, which implies that $\alpha(C \setminus\{c(u), c(v)\}, I')$ appears on the right-hand side of~\eqref{eq:dp_matching}.

    In the other direction, let $\alpha(C, I) = 0$, and assume for the sake of contradiction that \eqref{eq:dp_matching} assigns $1$ to it, i.e., there exists $uv \in E$ and $I' \subseteq I$ such that $c(u), c(v) \in C$, $I' \cup \beta(uv) = I$, and $\alpha(C \setminus\{c(u), c(v)\}, I') = 1$. Consider then the matching $M'$ certifying $\alpha(C \setminus\{c(u), c(v)\}, I') = 1$, and let $M$ be $M' \cup \{uv\}$. Clearly, $M$ is colorful and is a matching, as the edges in $M'$ have their endpoints' colors in $C \setminus \{c(u), c(v)\}$. Moreover, the set of colors covered by the edges of $M$ is then exactly $C$, and the set of ``dealt with'' negative samples is exactly $I$, as $I = I' \cup \beta(uv)$, where the former are satisfied by $M'$ and the latter by the edge $uv$. Therefore, we reach a contradiction that $\alpha(C, I)$ is $0$, which finishes the proof of correctness.

    Finally, a standard color-coding argument shows that solving the colorful version of the problem is sufficient, and this step could also be done in deterministic fashion with the claimed running time bound~\cite[Subsections 5.2 and 5.6]{DBLP:books/sp/CyganFKLMPPS15}. Observe that the dynamic programming above contains at most $2^k \cdot 2^{t^-}$ states $\alpha(C, I)$, and each is computed in time at most $2^{t^-} \cdot n^{O(1)}$ by~\eqref{eq:dp_matching}.
    
\end{proof}

The proof of the next theorem builds on the color-coding algorithm for $k$\textsc{-Path}, but otherwise the arguments are fairly similar to those used in the previous theorem.

\begin{theorem}
\label{thm:CCpathfpt}
\CCpath admits an algorithm which runs in time $2^{\bigoh(k + t^-)} \cdot |\mathcal{I}|^{\bigoh(1)}$; in particular, the problem is fixed-parameter tractable when parameterized by $k+t^-$.
\end{theorem}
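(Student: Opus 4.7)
The plan is to follow essentially the same color-coding strategy as in the proof of \Cref{thm:CCmatchfpt}, with the dynamic-programming table extended to also track the current endpoint of the partial path. I would first perform the same preprocessing step: replace all positive samples by their intersection $E = \bigcap_{i: \lambda_i=1} E_i$. Since a set of edges $F$ forms the same (abstract) path in every supergraph containing it, the condition ``$F$ forms a path in $G_i$ iff $\lambda_i = 1$'' reduces to: $F$ is a path on $k$ edges, $F \subseteq E$, and for each $i \in [t^-]$, $F \not\subseteq E_i$ (i.e., at least one edge of $F$ must avoid $E_i$). The corner case $t^- = t$ (no positive sample) can be folded into the main algorithm by taking $E = \binom{V}{2}$.

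Next, I would invoke the standard derandomized color-coding framework to produce a family of colorings $c \colon V \to [k+1]$ such that any fixed set of $k+1$ vertices is made colorful by at least one coloring in the family, at a cost of a $2^{\bigoh(k)} \log |V|$ multiplicative factor~\cite[Subsections 5.2 and 5.6]{DBLP:books/sp/CyganFKLMPPS15}. For each such coloring I would then look for a colorful path, i.e., a path on $k+1$ vertices whose colors are all distinct.

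The DP table is $\alpha(v, C, I)$, defined to be $1$ iff there is a path in $(V, E)$ ending at vertex $v$, whose vertex set has exactly the color set $C \subseteq [k+1]$ with $c(v) \in C$, and whose edge set $M$ satisfies $\{i \in [t^-] : M \not\subseteq E_i\} = I$. Initialization is $\alpha(v, \{c(v)\}, \emptyset) = 1$ for every $v \in V$. For $|C| \geq 2$, using the notation $\beta(uv) = \{i \in [t^-] : uv \notin E_i\}$ from the proof of \Cref{thm:CCmatchfpt}, the recurrence is
\begin{equation*}
\alpha(v, C, I) \;=\; \max_{\substack{u \in V : \, uv \in E, \, c(u) \in C \setminus \{c(v)\} \\ I' \subseteq I \, : \, I' \cup \beta(uv) = I}} \alpha\bigl(u,\, C \setminus \{c(v)\},\, I'\bigr).
\end{equation*}
The instance is a yes-instance (for this coloring) iff $\alpha(v, [k+1], [t^-]) = 1$ for some $v \in V$. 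Correctness follows by the same bidirectional argument as in the matching case: in one direction, peeling off the last edge of an alleged solution exhibits a witnessing predecessor state; in the other direction, appending the edge $uv$ to a path witnessing the predecessor state produces a colorful path with the required colors and negative-sample coverage, and colorfulness guarantees that $v$ is not already used.

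The table has $|V| \cdot 2^{k+1} \cdot 2^{t^-}$ entries and each recurrence is evaluated in $|V| \cdot 2^{t^-} \cdot |\mathcal{I}|^{\bigoh(1)}$ time, giving the claimed bound of $2^{\bigoh(k + t^-)} \cdot |\mathcal{I}|^{\bigoh(1)}$. The only conceptual obstacle beyond \Cref{thm:CCmatchfpt} is that paths are not symmetric under edge removal, which forces us to keep track of the current endpoint in the DP state; once this is in place, all arguments go through unchanged.
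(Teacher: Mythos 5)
Your main algorithm is essentially identical to the paper's proof: intersect the positive samples into a single graph $(V,E)$, apply derandomized color coding with $k+1$ colors, and run a DP with states $\alpha(v, C, I)$ that track the current endpoint, the set of used colors, and the set of negative samples already ``broken'' by the partial path; the recurrence and the bidirectional correctness argument match the paper's almost verbatim, as does the running-time bound.

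The one step that does not go through is your handling of the corner case $t^-=t$. You propose to ``fold it in'' by setting $E=\binom{V}{2}$ and running the same DP. This only searches over sets $F$ that actually \emph{are} paths, and then demands $F\nsubseteq E_i$ for every $i$. But the whole point of having no positive sample is that the solution $F$ need not be a path at all: $F$ ``does not form a path in $G_i$'' is satisfied for free whenever $F$ is not path-shaped, regardless of $E_i$. Your folding therefore loses exactly the solutions one typically needs in this case. Concretely, take $t^-=t=1$, $G_1$ the complete graph on $V$ with $|V|\ge 4$, and $k\ge 2$: every $k$-edge path is a subset of $E_1$, so your DP reports NO, yet any $F$ containing two vertex-disjoint pairs is a valid solution. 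The paper handles $t^-=t$ by a separate constant-time argument (output an $F$ containing two disjoint pairs when $|V|\ge 4$, and brute force when $|V|\le 3$) rather than by collapsing it into the DP; you need to do something along those lines instead of the $E=\binom{V}{2}$ shortcut.
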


\begin{proof}
    Let $(V, k, (E_1, \lambda_1), \ldots, (E_t, \lambda_t))$, be the input of \CCpath; w.l.o.g. assume that $\lambda_1 = \ldots = \lambda_{t^-} = 0$, and $\lambda_{t^- + 1} = \ldots = \lambda_{t} = 1$.
    First, assume $t^- = t$. If $|V| \ge 4$, return a set $F$ of two vertex pairs that do not share a vertex; clearly such $F$ is not a path in any instance and hence is a solution. On the other hand, the case where $|V| \le 3$ is trivial.

    We now have that $t^- < t$.
    By an analogous arguments as above, we can intersect all positive samples and arrive at the following equivalent formulation: A solution is a path $P$ of length $k$ in $(V, E)$ such that for every  $i \in [t^-]$, $P \nsubseteq E_i$, i.e., there is an edge in $P$ that is not present in $E_i$. Here, $E = E_{t^- + 1} \cap \cdots \cap E_t$.

    We again look for a colorful solution: assume that the vertex set $V$ is colored in $k + 1$ colors with a mapping $c: V \to [k + 1]$, and the task is to find a path $P$ such that in the set of vertices of $P$ each color appears exactly once, in addition to the properties of the solution above. Next we present a dynamic programming algorithm that finds such a colorful solution.

    For a vertex $v \in V$, $C \subseteq [k + 1]$, $I \subseteq[t^-]$ let $\alpha(v, C, I)$ be $1$ if there exists a path $P$ in $(V, E)$ such that its endpoint is $v$, its vertex set has exactly the colors in $C$, and such that $i \in I$ if and only if $P \nsubseteq E_i$; $\alpha(C, I) = 0$ if there is no such path. We compute the values $\alpha(v, C, I)$ in the order of increasing $|C|$. To initialize, for every vertex $v \in V$ we set $\alpha(v, \{c(v)\}, \emptyset) = 1$ and $\alpha(v, \{c(v)\}, I) = 0$ for all $I \subseteq [t^-]$, $I \ne \emptyset$. This is correct since every path with a single vertex is characterized by this vertex; each such path has an empty set of edges, and it is a subset of every set of edges $E_i$, $i \in [t^-]$.

    Now consider $v \in V$, $C \subseteq[k + 1]$ with $|C| > 0$, $I \subseteq [t^-]$, and assume $\alpha(v', C', I')$ is correctly computed for all $|C'| < |C|$ and all $v' \in V$, $I' \subseteq [t^-]$. For an edge $e \in E$, denote by $\beta(e)$ the set of negative samples that are ``dealt with'' by $e$, i.e., $\beta(e) = \{i \in [t^-] : e \notin E_i\}$. We set
    \begin{equation}
    \alpha(v, C, I) = \max_{\substack{uv \in E : c(u) \in C \\ I' \subseteq I: I' \cup \beta(uv) = I}} \alpha(u, C \setminus\{c(v)\}, I'),
        \label{eq:dp_path}
    \end{equation}
    if $c(v) \in C$; otherwise $\alpha(v, C, I)$ is clearly $0$.

    Next we show the correctness of the computation above. First, assume $\alpha(v, C, I) = 1$, and consider a corresponding solution $P$. Let $uv \in P$ be the final edge of the path, and let $I'$ be the set of negative samples ``dealt with'' by the remaining edges of $P$, i.e., $I' = \bigcup_{e \in P \setminus \{uv\}} \beta(e)$.
    Since $P$ is a colorful path, the endpoints of edges in $P \setminus \{uv\}$ have colors in $C \setminus \{c(v)\}$. Therefore, $\alpha(u, C \setminus\{c(v)\}, I') = 1$ as $P \setminus \{uv\}$ is a suitable solution. Then also $\alpha(v, C, I) = 1$  by \eqref{eq:dp_path}, since $uv \in E$, $c(u) \in C$ and $I' \cup \beta(uv) = I$, which implies that $\alpha(u, C \setminus\{c(v)\}, I')$ appears on the right-hand side of~\eqref{eq:dp_path}.

    In the other direction, let $\alpha(v, C, I) = 0$, and assume for the sake of contradiction that \eqref{eq:dp_path} assigns $1$ to it, i.e., there exists $uv \in E$ and $I' \subseteq I$ such that $c(u) \in C$, $I' \cup \beta(uv) = I$, and $\alpha(u, C \setminus\{c(v)\}, I') = 1$. Consider then the path $P'$ certifying $\alpha(u, C \setminus\{c(v)\}, I') = 1$, and let $P$ be $P' \cup \{uv\}$. Clearly, $P$ is colorful and is a path, as the edges in $P'$ have their endpoints' colors in $C \setminus \{c(v)\}$. Moreover, the set of colors covered by the edges of $P$ is exactly $C$, and the set of satisfied negative samples is exactly $I$, as $I = I' \cup \beta(uv)$. Therefore, we reach a contradiction that $\alpha(v, C, I)$ is $0$, which finishes the proof of correctness.

    A standard color-coding argument also shows that solving the colorful version of the problem is sufficient, and this step could also be done in deterministic fashion in the claimed running time bound~\cite[Subsections 5.2 and 5.6]{DBLP:books/sp/CyganFKLMPPS15}.
\end{proof}

\section{Consistency Checking for Selected Vertex Search Problems}
In the final technical section of this article, we focus our attention on consistency checking for two prominent vertex search problems in parameterized algorithmics: \textsc{Independent Set} and \textsc{Dominating Set}. As mentioned in the introduction, both problems are believed to be fixed-parameter intractable (the former is \W[1]-hard while the latter is \W[2]-hard), and so for the purposes of this article we restrict our attention to bounded-degree input graphs---or, more precisely, we consider the maximum degree as an additional parameter\footnote{We remark that all of the obtained results and proofs carry over also to the case where the maximum degree is considered to be an arbitrary fixed constant}. We formalize the consistency checking problems below.

\problempar{\CCIS}{Integers $k,d$, and $\mathcal{I}=\{V,(E_i, \lambda_i)_{i\in[t]}\}$ where for each $i\in [t]$, $G_i=(V,E_i)$ is a graph\\ of degree at most $d$ and $\lambda_i\in \{0,1\}$.}{
$k+d$.}{A set $X\subseteq V$ of size $k$ such that for each $i\in[t]$, $X$ forms\\ an independent set in $G_i$ if and only if $\lambda_i=1$.}

\problempar{\CCDS}{Integers $k,d$, and $\mathcal{I}=\{V,(E_i, \lambda_i)_{i\in[t]}\}$ where for each $i\in [t]$, $G_i=(V,E_i)$ is a graph\\ of degree at most $d$ and $\lambda_i\in \{0,1\}$.}{
$k+d$.}{A set $X\subseteq V$ of size $k$ such that for each $i\in[t]$, $X$ forms\\ a dominating set in $G_i$ if and only if $\lambda_i=1$.}

Once again, the complexity-theoretic properties of these problems turn out to be very different from those of their simpler graph search analogues. In particular, consistency checking for \textsc{Independent Set} is fundamentally harder than for the other two problems.

\begin{figure}
\label{fig:is}
\begin{tikzpicture}

\node[circ] (1) at (0,1) {};
\node[circ] (2) at (.8,2) {};
\node[circ] (3) at (2,1.4) {};
\node[circ] (4) at (1.8,.3) {};
\node[circ] (5) at (.8,0) {};

\draw (1) -- (4)
(4) -- (3)
(2) -- (4)
(2) -- (5)
; 

\end{tikzpicture}
\hspace{1cm}
\begin{tikzpicture}

\node[circ] (1) at (0,1) {};
\node[circ] (2) at (.8,2) {};
\node[circ] (3) at (2,1.4) {};
\node[circ] (4) at (1.8,.3) {};
\node[circ] (5) at (.8,0) {};

\draw (1) -- (2)
(2) -- (3)
(1) -- (4)
(2) -- (5)
; 

\end{tikzpicture}
\hspace{1cm}
\begin{tikzpicture}

\node[circ] (1) at (0,1) {};
\node[circ] (2) at (.8,2) {};
\node[circ] (3) at (2,1.4) {};
\node[circ] (4) at (1.8,.3) {};
\node[circ] (5) at (.8,0) {};

\draw (1) -- (2)
(2) -- (3)
(2) -- (4)
(2) -- (5)
(3) -- (4)
; 

\end{tikzpicture}
\hspace{1cm}
\begin{tikzpicture}

\node[circ] (1) at (0,1) {};
\node[circ] (2) at (.8,2) {};
\node[circ] (3) at (2,1.4) {};
\node[circ] (4) at (1.8,.3) {};
\node[circ] (5) at (.8,0) {};

\draw (1) -- (2)
(2) -- (3)
(2) -- (4)
(2) -- (5)
(3) -- (4)
(1) -- (4)
; 

\node (a) at (.3,.6) {\tiny $\{1, 2\}$};
\node (a) at (1.35,.1) {\tiny $\{1, 2, 3\}$};
\node (a) at (1.6,1.25) {\tiny $\{1, 3\}$};
\node (a) at (1.8,1.8) {\tiny $\{2, 3\}$};
\node (a) at (2.3,.7) {\tiny $\{1, 2\}$};
\node (a) at (.2,1.7) {\tiny $\{2, 3\}$};

\end{tikzpicture}
\caption{\CCIS instance with $G_1, G_2$ and $G_3$; and $\calG$ for the reformulation of \CCIS.}
\end{figure}

\begin{theorem}
\label{thm:ISWtwo}
There is no fixed-parameter algorithm for \textup{\textsc{ConsCheck: Independent}} \textup{\textsc{Set}[degree]} unless $\FPT=\W[2]$.
\end{theorem}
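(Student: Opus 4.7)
The plan is to reduce from \SCpr, which is W[2]-hard parameterized by the size $k'$ of the target subfamily. Given an input $(\calU, \calF, k')$ with $\calF = \{F_1, \ldots, F_m\}$, the construction produces an instance of \CCIS\ with $k = 2k'$ and degree bound $d = 1$, so every sample is literally a matching and the parameter $k + d$ depends only on $k'$.

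The vertex set is the disjoint union $V = \bigcup_{i \in [k']} V_i$, with each ``slot'' $V_i = \{v_i^{A,\ell}, v_i^{B,\ell} : \ell \in [m]\}$ containing two copies of every set. A set cover $\{F_{\sigma(1)}, \ldots, F_{\sigma(k')}\}$ is intended to correspond to the solution $X = \{v_i^{A,\sigma(i)}, v_i^{B,\sigma(i)} : i \in [k']\}$. To force this structure, I would include as positive samples the $2m - 2$ perfect matchings of a $1$-factorization of $K_{V_i} \setminus M_i$, where $M_i = \{v_i^{A,\ell} v_i^{B,\ell} : \ell \in [m]\}$ is the ``label-matching''; each such matching is a valid degree-$1$ sample. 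A short counting argument (no two $X$-vertices of $V_i$ can lie outside $M_i$, so $|X \cap V_i| \leq 2$, combined with $|X| = 2k'$) then pins down $X \cap V_i = \{v_i^{A,\sigma(i)}, v_i^{B,\sigma(i)}\}$ for a well-defined $\sigma(i)$ in each slot.

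Coverage is then encoded via one negative sample per element: for $u_j \in \calU$, set $E^-_j = \{v_i^{A,\ell} v_i^{B,\ell} : u_j \in F_\ell,\, i \in [k']\}$. Since $E^-_j \subseteq \bigcup_i M_i$, it is automatically a matching. The key observation is that $X$ contains both endpoints of some edge $v_i^{A,\ell} v_i^{B,\ell} \in E^-_j$ exactly when $\sigma(i) = \ell$ and $u_j \in F_\ell$, so $X$ fails to be independent in $E^-_j$ if and only if some $F_{\sigma(i)}$ contains $u_j$. Hence the consistency conditions on the negative samples jointly capture the requirement that $\{F_{\sigma(1)}, \ldots, F_{\sigma(k')}\}$ is a set cover, yielding the reverse direction; the forward direction is immediate by taking any set cover and building the corresponding $X$.

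The central obstacle that the construction is designed to sidestep is that negative-sample constraints in \CCIS\ are ``contain-an-edge'' rather than ``contain-a-vertex'' conditions, so the naive encoding of the coverage requirement via a star centered at a forced auxiliary vertex blows up the degree to~$m$. The vertex-duplication trick across $A$- and $B$-copies converts ``$X$ contains one of several vertices'' into ``$X$ contains one of several disjoint edges'', which is realized by a matching. Since the reduction is polynomial-time and satisfies $k + d = 2k' + 1$, it is a valid parameterized reduction, and so \CCIS\ inherits W[2]-hardness from \SCpr.
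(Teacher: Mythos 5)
Your reduction is correct, but it takes a noticeably more elaborate route than the paper's. The paper also reduces from \textsc{Set Cover}, also sets $d = 1$ and $k = 2k'$, and also encodes each set $F_\ell$ via a dedicated pair of vertices forming a matching edge --- but it uses \emph{only negative samples} and no slot structure at all. Concretely, the paper introduces a single pair $f_\ell^1, f_\ell^2$ per set $F_\ell$ (so $|V| = 2m$), and for each universe element $u_j$ creates one negative sample whose edge set is $\{f_\ell^1 f_\ell^2 : u_j \in F_\ell\}$; this is a matching, so $d = 1$. A size-$2k'$ vertex set $X$ hits every negative sample iff for each $u_j$ there is some $\ell$ with $u_j \in F_\ell$ and $\{f_\ell^1, f_\ell^2\} \subseteq X$, and the budget $|X| = 2k'$ automatically bounds the number of such fully-selected pairs by $k'$, giving a set cover. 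Your positive samples (the $1$-factorization of $K_{V_i} \setminus M_i$ in each slot) are there precisely to force $X$ into the shape $\{v_i^{A,\sigma(i)}, v_i^{B,\sigma(i)}\}_{i \in [k']}$, but the paper shows this structural constraint is unnecessary: the size budget alone does the required bookkeeping once the only edges anywhere in the instance lie on a fixed perfect matching. So your proof is a valid alternative, trading a larger vertex set ($2mk'$ versus $2m$) and extra positive samples for explicit structural control of $X$; the paper's version is leaner and, as a byproduct, shows that the hardness persists even when every sample is negative.
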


\begin{proof}
We present a reduction that given an instance $(\calU, \calF, k)$ of the classical \SCpr problem~\cite{DBLP:books/sp/CyganFKLMPPS15}, constructs an equivalent instance $\mathcal{I}$ of \CCIS.
Recall that an instance $(\calU, \calF, k)$ of \SCpr consists of a family $\calF$ of $m$ subsets over the $n$-element universe $\calU=[n]$, where we are interested in at most $k$ sets from $\calF$ whose union covers all of $\calU$.

Before proceeding with the reduction, we present a reformulation of \CCIS for the case where the instance $\mathcal{I}$ only consists of negative samples.

\vspace{0.2cm}\noindent\textit{Reformulation.}\quad
Consider an instance $\mathcal{I}$ of \CCIS that contains negative samples only, i.e. $\mathcal{I}=\{V, (E_i, 0)_{i\in [t]}\}$ for some $t$; to avoid overloading $k$, let us use $k'$ to denote the first component of the parameter for $\mathcal{I}$. 

Let us consider a graph $\calG$ over the vertex set $V$ and with the edge set $E(\mathcal{G})=\bigcup_{i=1}^{t} E(G_i)$. 
Now, for each edge $e\in E(\mathcal{G})$, we assign a label set $\mathcal{L}_e=\{i|e\in E(G_i)\}$. 
In other words, for each edge $e\in E(\calG)$ we remember the samples $e$ originates from.
  
Now, our aim is to select a set $H\subseteq V(\calG)$ of $k'$ vertices such that the union of all label-sets of edges between the vertices of $H$ is exactly $[t]$; formally, we seek a set $H$ of $k'$ vertices such that $\bigcup\limits_{v, u\in H}\calL_{uv}=[t]$.
Note that such a set $H\subseteq V(\calG)$ guarantees that, for each $i\in [t]$, there exist $u, v\in H$ such that $i\in \calL_{uv}$; the latter, by the definition of the label-set, means that $uv\in G_i$.
Thus, the set $H$ is not an independent set for any of the graphs $G_i=(V,E_i)$, $i\in [t]$.
Conversely, if for each $i\in [t]$ there are $u, v\in H$ such that $vu\in E_i$, then $uv\in \calL_{vu}$, and a union of the label-sets for each pair $u, v\in H$ then gives $[t]$.

\vspace{0.2cm}\noindent\textit{Construction.}\quad
Now, let us consider an instance $(U, \mathcal{F}, k)$ of \SCpr.
We construct an instance $(G, k')$ of the reformulated \CCIS problem described above, with $t=n$ being the size of the universe.
The graph $\calG$ is constructed as follows.
For each $F_i\in \mathcal{F}$, we introduce two vertices $f_i^1$, $f_i^2$, and add an edge $f_i$ with the set $F_i$ as its label-set $\calL_{f_i}$; $k'=2k$.

\vspace{0.2cm}\noindent\textit{Correctness.}\quad
Suppose, given an instance $(\calU, \calF, k)$ of \SCpr, that the reduction returns $(G, k')$ as an instance of the reformulated \CCIS.

Assume that $(\calU, \calF, k)$ admits $S\subseteq \calF$ such that both $\bigcup\limits_{F\in S}F=\calU$ and $|S|\leq k$ hold.
Then, let us construct a solution $H\subseteq V(G)$ for $(G, k')$.
For each $F\in S$, let us include both $f_i^1$ and $f_i^2$ to $H$.
Then, obviously, the condition on the set's cardinality will hold, i.e., $|H|\leq k'=2k$.
Also, $\bigcup\limits_{v, u\in H}\calL_{uv}=[n]$ since the label-sets are exactly the sets used in the solution $S$ for $(\calU, \calF, k)$ and $[n]$ is the universe.

For the other direction, suppose that we have $H\subseteq V(G)$ such that 
$\bigcup\limits_{v, u\in H}\calL_{uv}=[n]$.
Observe that $\calL_{uv}\neq \emptyset$ only for pairs of the form $f_i^1, f_i^2$ for some $i\in [m]$.
So, let $S\subseteq \calF$ be defined as follows: $S=\{F_i\mid \{f_i^1, f_i^2\}\subseteq H\}$.
Then, by the above construction, the sets in $S$ are identical to the label-sets of edges whose both endpoints are in $H$.
And, since the union of  the label-sets of edges whose both enpoints are in $H$ are $[n]$, the selected $S$ is a solution for the considered instance of \SCpr, i.e., $\bigcup\limits_{F\in S}F=\calU=[n]$.
\end{proof}

\begin{theorem}
\label{thm:DSFPT}
$\CCDS$ can be solved by a fixed-parameter algorithm running in time $\bigoh(2^{kd})\cdot |\mathcal{I}|$.
\end{theorem}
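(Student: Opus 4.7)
The plan is to exploit the degree bound together with a size bound on the closed neighborhood of $X$. The key observation is that if $G_i = (V, E_i)$ has maximum degree at most $d$ and $X \subseteq V$ with $|X| = k$, then $|N_{G_i}[X]| \le k(d+1)$. Hence $X$ can possibly be a dominating set in $G_i$ only when $|V| \le k(d+1)$, which immediately pins down the complexity of the problem whenever a positive sample forces $X$ to dominate something.

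Based on this, I would split the algorithm into two cases. \textbf{Case 1:} $\calI$ contains at least one positive sample $(E_j, 1)$. Any consistent $X$ must dominate $G_j$, so either $|V| \le k(d+1)$ or we immediately return ``no solution''. In the surviving subcase, I would brute-force enumerate every $k$-subset $X \subseteq V$; there are at most $\binom{|V|}{k} \le 2^{|V|} \le 2^{k(d+1)} = 2^{\bigoh(kd)}$ such subsets, and each can be checked against all $(E_i, \lambda_i)$ in time $\bigoh(|\calI|)$ by directly verifying whether $X$ dominates $G_i$ and comparing with $\lambda_i$. The first consistent $X$ found is returned; otherwise the answer is ``no''.

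\textbf{Case 2:} $\calI$ contains only negative samples. If $|V| \le k(d+1)$, the same brute-force enumeration works with the same running time bound, where consistency now simplifies to ``$X$ is not a dominating set in any $G_i$''. If instead $|V| > k(d+1)$, I would just output an arbitrary $k$-subset $X \subseteq V$: by the opening observation, $|N_{G_i}[X]| \le k(d+1) < |V|$ for every $i$, so $X$ fails to dominate every $G_i$, matching all negative labels. (If $|V| < k$ the instance is trivially ``no''.)

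The running time in every branch is bounded by $2^{k(d+1)} \cdot \bigoh(|\calI|) = \bigoh(2^{kd}) \cdot |\calI|$, as claimed. There is no real obstacle beyond noticing the degree-based bound on $|V|$; once that is in hand, the proof reduces to a routine case analysis, and the only care needed is to handle the purely-negative case where no positive sample restricts $|V|$, which is precisely the reason for the Case 2 split.
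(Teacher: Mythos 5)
Your proposal is correct and takes essentially the same approach as the paper: both observe that a size-$k$ set has closed neighborhood at most $k(d+1)$ under degree bound $d$, use this to reduce to $|V|\le k(d+1)$ (answering ``no'' for positive samples and outputting an arbitrary $k$-subset for the all-negative case when $|V|$ is large), and then brute-force enumerate candidate solutions on the small vertex set. The only cosmetic difference is that you enumerate exactly the $k$-subsets rather than all subsets and explicitly flag the trivial $|V|<k$ case.
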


\begin{proof}
We show that \CCDS\ admits a kernel of size $k(d+1)$. 

Indeed, each vertex in the solution dominates at most $d$ vertices outside of the solution.
Thus, if $|V|>k\cdot (d+1)$ and \CCDS\ contains a positive sample, we can immediately output that there is no solution, while if if $|V|>k\cdot (d+1)$ and \CCDS\ contains negative samples only, we can correctly output an arbitrary set of $k$ vertices as a solution.

At this point, it remains to deal with instances of \CCDS\ such that $|V|\leq k\cdot (d+1)$. To deal with these, it suffices iterate over all subsets of $V$ and check for each such subset whether it is a solution for $\CCDS$.

Since each such check can be carried out in linear time, in $\bigoh(2^{kd})\cdot |\mathcal{I}
|$ we either find a solution for \CCDS or correctly identify that no such solution exists.
\end{proof}

Similarly to Theorems~\ref{thm:ColFPT},~\ref{thm:CCmatchfpt} and~\ref{thm:CCpathfpt}, we turn our attention to whether the lower bound for \CCIS\ can be overcome if the number of negative samples is bounded by the parameter. While the \W[2]-hardness  reduction of Theorem~\ref{thm:ISWtwo} does not hold if we are given a bound on the number of samples, it turns out that---unlike for \textsc{2-Coloring}, \textsc{Matching} and ($k$-)\textsc{Path}---consistency checking for \textsc{Independent Set} remains fixed-parameter intractable even under this additional restriction.

\begin{theorem}
\label{thm:ISWone}
There is no fixed-parameter algorithm for \textup{\textsc{ConsCheck: Independent}} \textup{\textsc{Set}[degree]}\xspace even when the number $t^-$ of negative saimples is assumed to be an additional parameter, unless $\FPT=\W[1]$.
\end{theorem}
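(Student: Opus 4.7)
The plan is to establish $\W[1]$-hardness by a parameterized reduction from \textsc{Multicolored Clique}, which is well-known to be $\W[1]$-hard when parameterized by the number of colours $k$. Crucially, the reduction will use no negative samples at all and will keep the maximum degree of each sample equal to $1$, so that the new parameter $k + d + t^-$ exceeds the original parameter $k$ only by an additive constant.

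Given an instance $(G, V_1, \ldots, V_k)$ of \textsc{Multicolored Clique}, I would first apply the standard preprocessing step of deleting all intra-class edges (which is safe, since a multicolored clique never uses such an edge). Then I would build the \CCIS instance on the vertex set $V := V(G)$ by, for \emph{every} non-edge $\{u,v\}$ of $G$, adding a single positive sample whose edge set is exactly $\{\{u,v\}\}$; the solution size parameter is $k$ and the degree bound is $d = 1$, and the number of negative samples is $t^- = 0$. The construction is plainly polynomial in $|V(G)|$, and the new parameter is $k + 1 + 0 = k + \bigoh(1)$, so this is an \fpt-reduction.

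For correctness I would argue as follows. A set $X \subseteq V$ with $|X|=k$ is consistent with the constructed instance if and only if it contains no non-edge of $G$, equivalently, $X$ is a clique in $G$. Moreover, because all intra-class pairs were turned into non-edges during preprocessing, each colour class $V_i$ becomes a clique in the union $G^+$ of positive samples, which forces $|X \cap V_i| \leq 1$; combined with $|X| = k$ this yields exactly one vertex per colour class. Hence consistent solutions of the new instance correspond bijectively to multicolored cliques of $G$, which is precisely what is needed.

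I do not expect a serious obstacle here: the only subtlety is meeting the bounded-degree constraint on the samples, but this is handled trivially by placing a single edge in each positive sample rather than grouping edges. If one prefers a smaller number of samples, one could instead apply Vizing's theorem to decompose $E(\overline{G})$ into $\bigoh(|V|)$ matchings, giving $d=1$ with polynomially fewer samples; however, for the hardness argument alone, the simpler ``one edge per sample'' construction suffices.
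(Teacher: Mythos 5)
Your proof is correct and uses the same key idea as the paper's: split the forbidden pairs of a source graph into one singleton positive sample each, which trivially gives $d=1$ and $t^-=0$. The paper applies this directly to \textsc{Independent Set} (one positive sample per edge of $G$), while you route through \textsc{Multicolored Clique} (one positive sample per non-edge of the preprocessed $G$); this is the same construction applied to the complement graph, and the multicoloring machinery comes along for the ride but is not doing any extra work, since consistency already forces $X$ to be a $k$-clique of the preprocessed graph.
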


\begin{proof}
We present a simple reduction that given an instance $(G, k')$ of the classical \textsc{Independent Set} decision problems on graphs, constructs an equivalent instance $\mathcal{I}$ of \CCIS\ where each graph $(V,E_i)$ has maximum degree $1$, $t^-=0$, and $k=k'$.

\vspace{0.2cm}\noindent\textit{Construction.}\quad
We construct the instance $\mathcal{I}=\{(V,(E_i, \lambda_i)_{i\in[t]}\}$ where $V$ is the set of vertices in the instance $(G,k')$ of \textsc{Independent Set}, and for each edge $e_j$ in $G$ the set $(E_i, \lambda_i)$ contains a tuple $(\{e_j\},1)$. Notice that $t$ is then the number of edges in $G$. 

\vspace{0.2cm}\noindent\textit{Correctness.}\quad
Suppose, given an instance $(G,k')$ of \textsc{Independent Set}, that the reduction
described above returns $\mathcal{I}=\{V, (E_i, \lambda_i)_{i\in [t]}\}$ as an instance of \CCIS. Then every independent set in $G$ is also an independent set for each of the positive samples in $\mathcal{I}$, and at the same time an independent set on $V$ that is independent for each of the positive samples in $\mathcal{I}$ is also an independent set in $G$. Hence, the existence of a fixed-parameter algorithm that solves every instance $\mathcal{I}$ obtained in this way parameterized by $k$ plus the number of negative samples plus the maximum degree of a sample would imply a fixed-parameter algorithm for \textsc{Independent Set}. 
\end{proof}

While restricting the number of negative samples alone is insufficient to achieve tractability, we conclude by showing that restricting the total number of samples allows for a fixed-parameter algorithm that solves the problem via a combination of multi-step exhaustive branching and color coding.

\begin{theorem}
\label{thm:CCISfpt}
\CCIS admits an algorithm which runs in time $(kdt)^{\bigoh(k^2)}\cdot n^{\bigoh(1)}$; in particular, it is fixed-parameter tractable when parameterized by $k+d+t$.
\end{theorem}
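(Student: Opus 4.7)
My plan combines color coding on vertices with branching over a ``witness signature'' and a CSP-solving step, extending the methodology used in Theorems~\ref{thm:CCmatchfpt} and~\ref{thm:CCpathfpt} to the vertex-selection setting.

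First I would preprocess by forming the union of all positive samples: let $E^+ = \bigcup_{i : \lambda_i = 1} E_i$, so that any valid $X$ must be an independent set of size $k$ in $G^+ := (V, E^+)$---a graph of maximum degree at most $d \cdot t^+ \leq dt$---and must also contain at least one edge of each negative sample $E_i$, $i \in [t^-]$. Next I would apply a $(|V|, k)$-perfect hash family (of size $2^{\bigoh(k)} \log |V|$) to obtain a coloring $c : V \to [k]$ under which the target $X$ becomes colorful, and for each such coloring I would branch over every function $\sigma : [t^-] \to \binom{[k]}{2}$ whose value $\sigma(i)$ prescribes the pair of colors of the two vertices in $X$ that witness the non-independence in negative sample $i$. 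After fixing $\sigma$, the problem reduces to selecting a vertex $x_c$ of color $c$ for each $c \in [k]$ such that (i) $\{x_a, x_b\} \in \bigcap_{i \in T_{ab}} E_i$ for every pair $\{a, b\}$ with $T_{ab} := \sigma^{-1}(\{a, b\}) \neq \emptyset$, and (ii) $\{x_a, x_b\} \notin E^+$ for every pair $\{a, b\} \in \binom{[k]}{2}$.

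The main obstacle is solving this CSP within the stated time bound, since individual color-class domains can be of size up to $|V|$. The key observation enabling fixed-parameter tractability is that every type-(i) constraint is \emph{functional with branching factor $d$}: once $x_a$ is fixed, at most $d$ values of $x_b$ remain admissible, since $\bigcap_{i \in T_{ab}} E_i$ inherits the maximum degree bound $d$. Building an auxiliary constraint graph $H$ on $[k]$ with edges $\{a, b\}$ whenever $T_{ab} \neq \emptyset$, the values of variables in each nontrivial connected component of $H$ can be determined by anchoring a single root and propagating along a spanning tree of the component with branching factor $d$ per edge; colors isolated in $H$ can then be filled in last, exploiting the fact that $G^+$ has maximum degree at most $dt$, so that the set of placements forbidden by constraint (ii) for each isolated color has size at most $k \cdot dt$. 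The technically delicate point is the joint handling of multiple nontrivial components of $H$, where a naive choice of an independent root per component would cost $|V|^{\Omega(k)}$; this is overcome by a further auxiliary-graph construction in which each ``candidate root'' of a nontrivial component is encoded together with its incident witness edges and its positive-sample conflicts with other components, so that the joint selection of roots is realized as finding a colorful structure of size $\bigoh(k)$ in a graph of max degree bounded by a function of $k$, $d$ and $t$. Combining the perfect hashing, the branching over $\sigma$, and the CSP search through this auxiliary graph then yields the claimed $(kdt)^{\bigoh(k^2)} \cdot n^{\bigoh(1)}$ running time and hence fixed-parameter tractability in $k + d + t$.
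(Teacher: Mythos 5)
Your overall skeleton—color coding on $k$ vertices, branching over a signature that specifies which pair of colors witnesses each negative sample (the paper's ``template'' plays a very similar role), and then exploiting the degree bound to propagate a choice along each connected component of the constraint graph $H$ with branching factor~$d$—matches the paper's proof closely. The preliminary reductions (intersecting/uniting positive samples, the CSP reformulation, and enumeration of per-component placements in $n\cdot d^{O(k)}$ time) are all sound.

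The gap is exactly where you flag the ``technically delicate point''. After enumerating, for each nontrivial component $C_j$ of $H$, the list $L_{C_j}$ of candidate placements, and after accounting for isolated colors, you must choose one placement per component (and one vertex per isolated color) so that all cross-component pairs avoid $E^+$. You propose to encode this as ``finding a colorful structure of size $\bigoh(k)$ in a graph of degree bounded by a function of $k,d,t$'', but this is not yet an algorithm: the object you need is a \emph{rainbow independent set} in a conflict graph (one placement per component, pairwise compatible), which is \emph{not} a connected pattern, so the color-coding machinery you invoked at the outset does not apply to it; bounded degree alone does not directly yield the choice either. What is missing is精 the small-versus-large dichotomy that the paper supplies at this point: exhaustively branch over every $S_j\in L_{C_j}$ for those $L_{C_j}$ of size at most roughly $(kdt\cdot(dt)^k)^k$, and for the remaining ``large'' lists argue by a counting/pigeonhole bound (the paper bounds the number of placements that can touch the distance-$k$ neighborhood, in the union of negative samples, of the $\bigoh(kdt)$ positive-neighbors of the already-chosen vertices) that a compatible placement always remains and can be added greedily last. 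Your conflict-graph degree bound is in the right spirit and can be turned into such an argument, but as written the joint-root-selection step is asserted rather than proved, and this is precisely the part of the proof that carries the technical weight. A secondary, minor point: your branching over $\sigma:[t^-]\to\binom{[k]}{2}$ contributes a factor $k^{\Theta(t)}$, which is fine for FPT in $k+d+t$ but does not by itself match the stated $(kdt)^{\bigoh(k^2)}$ bound.
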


\begin{proof}
Consider an input instance $\mathcal{I}=\{V,(E_i, \lambda_i)_{i\in[t]}\}, k, d$ of \CCIS, and recall that the solution is a vertex set of size $k$. Let us denote the negative samples in $\mathcal{I}$ as $(E^-_1,0),\dots,(E^-_{t^-},0)$, where $t^-\leq t$ is the number of negative samples in the instance. The algorithm begins by exhaustively branching over all $k$-vertex graphs and all possible labelings of the edges of these graphs by subsets of $[t^-]$. We call each such graph considered in a separate branch in the algorithm a \emph{template}, and we note that the number of templates is upper-bounded by $2^{k^2}$.  We immediately discard templates such that there exists a label $z\in [t^-]$ which does not occur on any of its edges.

Intuitively, a template captures the behavior of a hypothetical solution with respect to the negative samples. Indeed, observe that for every hypothetical $k$-vertex solution $S$ of $\mathcal{I}$, we can construct a template $T_S$ as follows: 
\begin{itemize}
\item the vertices of $T_S$ are mapped to $S$ by an arbitrary bijection; 
\item whenever two vertices $s,t\in S$ are not adjacent to each other in any negative sample, we keep their counterparts non-adjacent in $T_S$; and
\item whenever two vertices $s,t\in S$ are adjacent to each other in negative samples $\{E_z~|~z\in Z\}$ for some $Z\subseteq [t^-]$, we place an edge between their counterparts in $T_S$ and label that edge by $Z$.
\end{itemize}

Second, in each branch where we have a fixed template $T$, we apply the color-coding technique with derandomization~\cite[Subsections 5.2 and 5.6]{DBLP:books/sp/CyganFKLMPPS15} to construct a family $\mathcal{B}$ of $k$-colorings of $V$ such that if there exists a solution $S=\{s_1,\dots,s_k\}$, then there will exist a coloring $B\in \mathcal{B}$ such that each $s_i\in S$ will receive the color $i$. We then branch over all colorings in $\mathcal{B}$, and the running time required for this branching step is upper-bounded by $(2e)^k\cdot k^{\bigoh(k)}\cdot n^{\bigoh(1)}$\cite[Subsection 5.6]{DBLP:books/sp/CyganFKLMPPS15}.

In the third step, the algorithm exploits the degree bound $d$ to enumerate all appropriately colored isomorphic copies of each of the connected components in $T$ which form independent sets in the positive instances. More precisely, for each connected component $C\in T$ consisting of vertices $s_{\alpha(1)},\dots,s_{\alpha(\ell)}$ for some $\ell\leq k$, the algorithm branches over all of the at most $|V|$ many choices of $s'_{\alpha(1)}\in V$ which received color $\alpha(1)$. It then chooses a new vertex in $C$ which is adjacent to $s_{\alpha(1)}$, say $s_{\alpha(i)}$, via an edge labeled by some edge-label set $Z$. To identify $s'_{\alpha(i)}\in V$, it chooses an arbitrary label $z\in Z$ and then branches over the at most $d$ many neighbors of $s'_{\alpha(1)}$ in $E^-_z$; for each such neighbor $v$, the algorithm tests whether $v$ is independent from $s'_{\alpha(1)}$ in all positive instances and whether $v$ is adjacent to $s'_{\alpha(1)}$ precisely in those negative instances whose indices are in $Z$. This branching procedure requires time at most $n\cdot d^{\ell-1}\leq n\cdot d^k$ and allows us to construct the set $L_C$ of all $\ell$-vertex subsets which are (1) independent in the positive samples, (2) colored in the same way as $C$, and (3) correspond to $C$ in the way described in the second paragraph of the proof.

In the final fourth step of the algorithm, assume we have constructed the sets $L_{C_1},\dots,L_{C_p}$ for the connected components $C_1,\dots, C_j$ of $T$. For each \emph{small} set $L_{C_j}$, $j\in [p]$, such that $|L_{C_j}|\leq \sum_{i\in [k]}{{kdt\cdot (dt)^k} \choose {i}}$, we perform exhaustive branching to choose a subset $S_j\in L_{C_j}$ and add it to a solution set $S'$. On the other hand, \emph{large} sets $L_{C_j}$  such that $|L_{C_j}|> \sum_{i\in [k]}{{kdt\cdot (dt)^k} \choose {i}}$ are ignored; the justification for this will become clear in the correctness argument, whereas the intuition is that in this case we are guaranteed to find a vertex subset in $L_{C_j}$ which will be independent from whichever other vertices are chosen to be part of the solution. Finally, we check whether the set $S'$ constructed in this branch is consistent with all positive samples; in particular, it is necessary to check that vertices originating from different components of $T$ are independent in all positive samples. If this test fails, the algorithm proceeds to the next branch. If $S'$ succeeds with this final test, the algorithm searches each large set $L_{C_j}$ until it finds an arbitrary set $S_j\in L_{C_j}$ which is independent from $S'$ in all positive samples, and adds $S_j$ to $S'$. It then outputs the constructed set $S'$. 

The running time of the algorithm described above is upper-bounded by $n^{\bigoh(1)}\cdot (kd^kt^k)^{\bigoh(k)}$. For correctness, let us assume the existence of a hypothetical solution $S$ and let $T_S$ be a template which corresponds to $S$ as described in the second paragraph of the proof. Consider the branch in which the algorithm considers the template $T_S$, and then a branch in which it selects a color family $B\in \mathcal{B}$ such that each vertex in $S$ receives a unique color. In the third step, the algorithm will construct the sets $L_{C_1},\dots,L_{C_p}$ for each of the $p$ connected components of $T_S$; in particular, there exists some $S_1\in L_{C_1},\dots,S_p\in L_{C_p}$ such that $S=\bigcup_{\ell\in [p]}S_\ell$. For the final branching step, let us consider the branch in which the algorithm correctly identifies those subsets $S_j$, $j\in [p]$, such that $S_j\subseteq S$ for each small $L_{C_j}$. 

To complete the proof, it remains to argue that the algorithm will extend the set $S_0\subseteq S$ constructed so far into some $k$-vertex solution for $\mathcal{I}$. To this end, notice that since $|S_0|\leq k$, there are at most $kdt$ many vertices that are adjacent to at least one vertex in $S_0$ in at least one \textbf{positive} sample; let us denote this vertex set $M$. Moreover, there are at most $kdt\cdot (dt)^k$ vertices in the distance-$k$ neighborhood of $M$ in the graph $(V,\bigcup_{u\in [t^-]}E^-_u)$ of all \textbf{negative} samples; let us denote the vertices in this distance-$k$ neighborhood by $M^+$. The total number of vertex subsets of size at most $k$ in $M^+$ is upper-bounded by $\sum_{i\in [k]}{{kdt\cdot (dt)^k} \choose {i}}$. This means that each large $L_{C_j}$ must contain at least one set, say $S^*_j$, which is not fully contained in $M^+$ and in particular contains at least one vertex outside of $M^+$. And since $S^*_j$ is connected and contains at most $k$ vertices, it must be completely distjoint from $M$. Hence $S^*_j$ must be non-adjacent to $S_0$. This guarantees that the algorithm will discover at least one set in the first large $L_{C_j}$ which can be added to its constructed set $S_0$. Crucially, since the only property of $S_0$ that was used in this argument was that $|S_0|\leq k$, it can be repeated in verbatim for every other large $L_{C_j}$. In summary the algorithm is guaranteed to output a set $S'\supseteq S_0$ which is a $k$-vertex solution for $\mathcal{I}$, as desired.

\end{proof}

\section{Concluding Remarks}
This article can be seen as a ``brief expedition into the forgotten island of consistency checking''---a place where \textsc{Split Graph} and \textsc{Edge Clique Cover} are tractable but \textsc{$2$-Coloring} and \textsc{Matching} are not, and where on bounded-degree graphs  \textsc{Independent Set} is \W[2]-hard while \textsc{Dominating Set} admits a fixed-parameter algorithm. 

To conclude on a more serious note, we remark that our understanding of parameterized consistency checking---and, more broadly, of sample complexity---is still in its infancy. Even in the setting of PAC learning considered here, we so far know very little about which learning problems belong to the classes \fptpac\ and \xppac. Still, we hope that the results and techniques presented in this article can contribute to bridging the gap between the parameterized (time) complexity and the sample complexity research fields. A natural target for future work in this direction would be to further deepen our understanding of problems such as learning CNF and DNF formulas~\cite{PittV88,AlekhnovichBFKP08,BrandGanianSimonov23} or juntas~\cite{MosselOS03}.

\bibliography{bib-bib}

\begin{thebibliography}{10}

\bibitem{AgarwalABLU21}
Sushant Agarwal, Nivasini Ananthakrishnan, Shai Ben{-}David, Tosca Lechner, and
  Ruth Urner.
\newblock Open problem: Are all {VC}-classes {CPAC} learnable?
\newblock In Mikhail Belkin and Samory Kpotufe, editors, {\em Conference on
  Learning Theory, {COLT} 2021, 15-19 August 2021, Boulder, Colorado, {USA}},
  volume 134 of {\em Proceedings of Machine Learning Research}, pages
  4636--4641. {PMLR}, 2021.
\newblock URL:
  \url{http://proceedings.mlr.press/v134/open-problem-agarwal21b.html}.

\bibitem{AlekhnovichBFKP08}
Michael Alekhnovich, Mark Braverman, Vitaly Feldman, Adam~R. Klivans, and
  Toniann Pitassi.
\newblock The complexity of properly learning simple concept classes.
\newblock {\em J. Comput. Syst. Sci.}, 74(1):16--34, 2008.
\newblock \href {https://doi.org/10.1016/j.jcss.2007.04.011}
  {\path{doi:10.1016/j.jcss.2007.04.011}}.

\bibitem{ArvindKL09}
Vikraman Arvind, Johannes K{\"{o}}bler, and Wolfgang Lindner.
\newblock Parameterized learnability of juntas.
\newblock {\em Theor. Comput. Sci.}, 410(47-49):4928--4936, 2009.
\newblock \href {https://doi.org/10.1016/j.tcs.2009.07.003}
  {\path{doi:10.1016/j.tcs.2009.07.003}}.

\bibitem{BlumerEHW89}
Anselm Blumer, Andrzej Ehrenfeucht, David Haussler, and Manfred~K. Warmuth.
\newblock Learnability and the {V}apnik-{C}hervonenkis dimension.
\newblock {\em J. {ACM}}, 36(4):929--965, 1989.
\newblock \href {https://doi.org/10.1145/76359.76371}
  {\path{doi:10.1145/76359.76371}}.

\bibitem{BousquetHMZ20}
Olivier Bousquet, Steve Hanneke, Shay Moran, and Nikita Zhivotovskiy.
\newblock Proper learning, helly number, and an optimal {SVM} bound.
\newblock In Jacob~D. Abernethy and Shivani Agarwal, editors, {\em Conference
  on Learning Theory, {COLT} 2020, 9-12 July 2020, Virtual Event [Graz,
  Austria]}, volume 125 of {\em Proceedings of Machine Learning Research},
  pages 582--609. {PMLR}, 2020.
\newblock URL: \url{http://proceedings.mlr.press/v125/bousquet20a.html}.

\bibitem{BrandGanianSimonov23}
Cornelius Brand, Robert Ganian, and Kirill Simonov.
\newblock A parameterized theory of {PAC} learning.
\newblock In {\em Thirty-Seventh {AAAI} Conference on Artificial Intelligence,
  {AAAI} 2023}. {AAAI} Press, 2023.
\newblock to appear.
\newblock URL: \url{https://arxiv.org/abs/2304.14058}.

\bibitem{DBLP:books/sp/CyganFKLMPPS15}
M.~Cygan, F.~V. Fomin, L.~Kowalik, D.~Lokshtanov, D.~Marx, M.~Pilipczuk,
  M.~Pilipczuk, and S.~Saurabh.
\newblock {\em Parameterized Algorithms}.
\newblock Springer, 2015.

\bibitem{CPP_ECC}
Marek Cygan, Marcin Pilipczuk, and Micha\l{} Pilipczuk.
\newblock Known algorithms for edge clique cover are probably optimal.
\newblock {\em SIAM Journal on Computing}, 45(1):67--83, 2016.
\newblock \href {https://doi.org/10.1137/130947076}
  {\path{doi:10.1137/130947076}}.

\bibitem{DamaschkeM10}
Peter Damaschke and Azam~Sheikh Muhammad.
\newblock Competitive group testing and learning hidden vertex covers with
  minimum adaptivity.
\newblock {\em Discret. Math. Algorithms Appl.}, 2(3):291--312, 2010.
\newblock \href {https://doi.org/10.1142/S179383091000067X}
  {\path{doi:10.1142/S179383091000067X}}.

\bibitem{Diestel17}
Reinhard Diestel.
\newblock {\em Graph Theory}.
\newblock Graduate Texts in Mathematics. Springer, Berlin, Heidelberg, 5th
  edition, 2017.
\newblock \href {https://doi.org/10.1007/978-3-662-53622-3}
  {\path{doi:10.1007/978-3-662-53622-3}}.

\bibitem{EibenOrdyniakPaesaniSzeider23}
Eduard Eiben, Sebastian Ordyniak, Giacomo Paesani, and Stefan Szeider.
\newblock Learning small decision trees with large domain.
\newblock In {\em The 32nd International Joint Conference on Artificial
  Intelligence (IJCAI-23), August 19--25, 2023, Macao, S.A.R.} International
  Joint Conferences on Artificial Intelligence Organization, 2023.
\newblock to appear.

\bibitem{Gramm_ECC}
Jens Gramm, Jiong Guo, Falk Hüffner, and Rolf Niedermeier.
\newblock Data reduction and exact algorithms for clique cover.
\newblock {\em ACM Journal of Experimental Algorithmics}, 13, 09 2008.
\newblock \href {https://doi.org/10.1145/1412228.1412236}
  {\path{doi:10.1145/1412228.1412236}}.

\bibitem{HammerS81}
Peter~L. Hammer and Bruno Simeone.
\newblock The splittance of a graph.
\newblock {\em Comb.}, 1(3):275--284, 1981.
\newblock \href {https://doi.org/10.1007/BF02579333}
  {\path{doi:10.1007/BF02579333}}.

\bibitem{Hanneke16}
Steve Hanneke.
\newblock The optimal sample complexity of {PAC} learning.
\newblock {\em J. Mach. Learn. Res.}, 17:38:1--38:15, 2016.
\newblock URL: \url{http://jmlr.org/papers/v17/15-389.html}.

\bibitem{ImpagliazzoPZ01}
Russell Impagliazzo, Ramamohan Paturi, and Francis Zane.
\newblock Which problems have strongly exponential complexity?
\newblock {\em J. Comput. Syst. Sci.}, 63(4):512--530, 2001.
\newblock \href {https://doi.org/10.1006/jcss.2001.1774}
  {\path{doi:10.1006/jcss.2001.1774}}.

\bibitem{KVazirani94}
M.~Kearns and U.~Vazirani.
\newblock {\em An Introduction to Computational Learning Theory}.
\newblock MIT Press, 1994.

\bibitem{MLbook}
Mehryar Mohri, Afshin Rostamizadeh, and Ameet Talwalkar.
\newblock {\em Foundations of Machine Learning}.
\newblock Adaptive computation and machine learning. {MIT} Press, 2012.
\newblock URL:
  \url{http://mitpress.mit.edu/books/foundations-machine-learning-0}.

\bibitem{MosselOS03}
Elchanan Mossel, Ryan O'Donnell, and Rocco~A. Servedio.
\newblock Learning juntas.
\newblock In Lawrence~L. Larmore and Michel~X. Goemans, editors, {\em
  Proceedings of the 35th Annual {ACM} Symposium on Theory of Computing, June
  9-11, 2003, San Diego, CA, {USA}}, pages 206--212. {ACM}, 2003.
\newblock \href {https://doi.org/10.1145/780542.780574}
  {\path{doi:10.1145/780542.780574}}.

\bibitem{OrdyniakSzeider21}
Sebastian Ordyniak and Stefan Szeider.
\newblock Parameterized complexity of small decision tree learning.
\newblock In {\em Proceeding of AAAI-21, the Thirty-Fifth AAAI Conference on
  Artificial Intelligence}, pages 6454--6462. {AAAI} Press, 2021.
\newblock URL: \url{https://ojs.aaai.org/index.php/AAAI/article/view/16800}.

\bibitem{PittV88}
Leonard Pitt and Leslie~G. Valiant.
\newblock Computational limitations on learning from examples.
\newblock {\em J. {ACM}}, 35(4):965--984, 1988.
\newblock \href {https://doi.org/10.1145/48014.63140}
  {\path{doi:10.1145/48014.63140}}.

\bibitem{SheynerW03}
Oleg Sheyner and Jeannette~M. Wing.
\newblock Tools for generating and analyzing attack graphs.
\newblock In {\em {FMCO}}, volume 3188 of {\em Lecture Notes in Computer
  Science}, pages 344--372. Springer, 2003.
\newblock URL: \url{https://www.cs.cmu.edu/~scenariograph/sheynerwing04.pdf}.

\bibitem{Valiant84}
L.~G. Valiant.
\newblock A theory of the learnable.
\newblock {\em Commun. {ACM}}, 27(11):1134--1142, 1984.
\newblock \href {https://doi.org/10.1145/1968.1972}
  {\path{doi:10.1145/1968.1972}}.

\bibitem{BergeremGR22}
Steffen van Bergerem, Martin Grohe, and Martin Ritzert.
\newblock On the parameterized complexity of learning first-order logic.
\newblock In {\em Proceedings of the 41st ACM SIGMOD-SIGACT-SIGAI Symposium on
  Principles of Database Systems}, PODS '22, page 337–346, New York, NY, USA,
  2022. Association for Computing Machinery.
\newblock \href {https://doi.org/10.1145/3517804.3524151}
  {\path{doi:10.1145/3517804.3524151}}.

\bibitem{Wing06}
Jeannette~M. Wing.
\newblock Attack graph generation and analysis.
\newblock New York, NY, USA, 2006. Association for Computing Machinery.
\newblock \href {https://doi.org/10.1145/1128817.1128822}
  {\path{doi:10.1145/1128817.1128822}}.

\end{thebibliography}

\end{document}